\documentclass[final]{siamltex1213hacked}
\usepackage{amsmath,amssymb, amsfonts}
\usepackage{algorithm}
\usepackage{float}
\usepackage{mathtools}
\usepackage{subcaption}
 \usepackage{relsize}
\usepackage{algpseudocode}
\usepackage{graphicx}
\usepackage{booktabs} 
\usepackage[table]{xcolor}
\usepackage{bm}

\title{Efficient computation of Sobol' indices for stochastic models\thanks{This work was
supported by the National Science Foundation under grant DMS-1522765.}}

\author{J.~L. Hart\thanks{Department of Mathematics, North Carolina State
University, Raleigh, NC 27695-8205 \email{jlhart3@ncsu.edu}}  \and A.
Alexanderian\thanks{Department of Mathematics, North Carolina State University,
Raleigh, NC 27695-8205 \email{alexanderian@ncsu.edu}} \and P.~A.
Gremaud\thanks{Department of Mathematics, North Carolina State University,
Raleigh, NC 27695-8205 \email{gremaud@ncsu.edu}} }


\newcommand{\ra}[1]{\renewcommand{\arraystretch}{#1}}

\renewcommand{\vec}[1]{{\mathchoice
                     {\mbox{\boldmath$\displaystyle{#1}$}}
                     {\mbox{\boldmath$\textstyle{#1}$}}
                     {\mbox{\boldmath$\scriptstyle{#1}$}}
                     {\mbox{\boldmath$\scriptscriptstyle{#1}$}}}}

\newcommand{\R}{\mathbb{R}}

\newcommand{\snDA} {\mathrm{D_A}}
\newcommand{\snDR} {\mathrm{D_R}}
\newcommand{\snDAp}{\mathrm{D^\prime_A}}
\newcommand{\snDRp}{\mathrm{D^\prime_R}}
\newcommand{\snMA} {\mathrm{M_A}}
\newcommand{\snA}  {\mathrm{A}}
\newcommand{\snMR} {\mathrm{M_R}}
\newcommand{\snR}  {\mathrm{R}}
\newcommand{\snC}  {\mathrm{C}}

\newtheorem{example}{Example}[section]

\newcommand{\defeq}{\vcentcolon=}

\begin{document}
\maketitle
\newcommand{\slugmaster}{%
\slugger{juq}{xxxx}{xx}{x}{x--x}}

\renewcommand{\thefootnote}{\fnsymbol{footnote}}

\begin{abstract}
 Stochastic models are  necessary for the realistic description of an increasing number of applications. 
The ability to identify influential parameters and variables is critical to 
 a thorough analysis and understanding of the underlying phenomena.   
 We present a new global sensitivity analysis approach for stochastic models, i.e., models with both uncertain parameters and intrinsic 
 stochasticity. Our method relies on an analysis of variance 
 through a generalization  of Sobol' indices and on the use of surrogate models. We show how to efficiently compute  the statistical properties of the resulting indices and illustrate the effectiveness of our approach by computing first 
order Sobol' indices for two stochastic models.
\end{abstract}

\begin{keywords}global sensitivity,  Sobol' indices, stochastic models, surrogate models, MARS, high dimensions\end{keywords}

\begin{AMS} 60G99, 65C05, 65C20, 62H99, 62J02\end{AMS}

\pagestyle{myheadings}
\thispagestyle{plain}
\markboth{J.L. HART, A. ALEXANDERIAN,  AND P.A.~GREMAUD}{Sobol' indices for stochastic models} 

\section{Introduction}
\label{sec:intro}

Stochastic computer models are non-deterministic simulators: repeated evaluations with the same inputs yield different outputs.  Examples
include agent-based models, queuing models, Monte-Carlo based numerical models
and models of intrinsically stochastic phenomena such as those found in
biological systems \cite{darren} or chemical reaction networks
\cite{ssa1}. We consider stochastic computer models of the form
\begin{equation}
   Y = f(\mathbf X(\vartheta), \omega),   \label{stochmod}
\end{equation}
where  $\mathbf X= (X_1, \dots, X_p)$ is a random vector whose entries
are uncertain model parameters;
the variables $\vartheta$ and $\omega$ correspond to  two distinct
sources of randomness, namely the uncertain parameters in the model and  the
stochasticity of the problem, respectively. The precise mathematical formulation of (\ref{stochmod}) 
is given in  Section~3. 

The development and predictive capabilities of such computer models
depend on the ability to apportion uncertainty in the model output to different sources of uncertainty in the model 
input parameters and intrinsic stochasticity, i.e., on global sensitivity analysis \cite{saltellibook}.  To that end, efficient 
methods \cite{iooss,sobolowen,saltellibook,sobol93,sobol} have been developed for the
 simpler model 
\begin{equation}
Y = f(\mathbf X(\vartheta)), \label{detmod}
\end{equation}
which only incorporates parametric uncertainty. 
For stochastic models such as~\eqref{stochmod},  however,  even the 
concept of sensitivity is delicate.  Additionally,  stochastic models are both
computationally more demanding and substantially harder to fit to data than
their deterministic counterparts; the need for efficient and reliable
sensitivity analysis in the context of (\ref{stochmod}) is thus clear.

In this article, we propose a new notion of global sensitivity for stochastic models
based on
\begin{itemize}
\item  a generalization of Sobol' indices  \cite{sobol93}  to the case of stochastic models,
\item the use of  surrogate models.
\end{itemize}
We briefly introduce both concepts. 

For (\ref{detmod}), the Sobol' indices are defined as
\begin{equation}
S_u = \frac{\operatorname{Var}\{ \mathbb E\{f(\mathbf X)| \mathbf X_u\}\} }{\operatorname{Var}\{f(\mathbf X)\}} , 
\qquad u \subset \{1,2,\dots,p\}, \label{Sobolindex}
\end{equation}
where $\mathbf X_u$ denotes the subset of entries in $\mathbf X$ corresponding to $u$; for instance $\mathbf X_{\{2,5\}}=(X_2,X_5)$. The indices apportion relative contributions to the variance of
the output among the inputs; variables contributing more (larger
$S_u$'s) are deemed more important. When $u=\{k\}$, $S_k$ is called the first order Sobol' index; when $u=\{k\}^c$, $T_k=1-S_u$ is called the total Sobol' index. A direct application of this concept 
to~\eqref{stochmod} instead of~\eqref{detmod}  yields Sobol' indices
$S_u$, $u \subset \{1,2,\dots,p\}$, which are themselves random variables. Example~\ref{ex:synthetic} 
below illustrates this point; a full justification is given in Section~3.

Traditional methods to evaluate the Sobol' indices  (\ref{Sobolindex}) involve Monte Carlo integration \cite{saltelli} and are  infeasible for problems where $f$ is expensive to evaluate.
To overcome this obstacle, a surrogate model $\hat f$ can be constructed whereby
\begin{itemize}
\item  $\hat f$ is representative of $f$, i.e., $\hat f \approx f$ in some sense, 
\item  $\hat f$ can be evaluated cheaply.
\end{itemize}
Several families of surrogate models  (or metamodels) have been proposed
including polynomial chaos expansions~\cite{Sudret:2008,Crestaux:2009,kdn,LeMaitreKnio10,BlatmanSudret10,GratietMarelliSudret15},
Kriging models and Gaussian processes \cite{kleijnen,legratiet,GratietMarelliSudret15}, and
non-parametric statistical models \cite{hastie2009,sacks}.

Our proposed approach is as follows: 

\begin{enumerate}
\item construct a surrogate model $\hat{f}$ of (\ref{stochmod}),
\item compute the Sobol' indices of $\hat{f}$ (which are here random variables themselves),
\item compute the statistical properties of the Sobol'  indices.
\end{enumerate}

While replacing $f$ by $\hat f$ greatly facilitates computational analysis, it also 
creates a fundamental difficulty: to what extend is the global sensitivity analysis of $\hat f$ reflective of the properties 
of $f$? This difficult and general question is largely open in the context of global sensitivity analysis (see, however, our discussion in Section~2.1 below). 

For stochastic models such as~\eqref{stochmod}, an alternative  approach is to
first marginalize over $\omega$, and then evaluate the Sobol' indices. For instance, 
we can construct a surrogate model $\hat g$ for the expected value in $\omega$
\[
\hat{g}(\vec{X}(\vartheta)) \approx g(\vec{X}(\vartheta)) \defeq \mathbb E_\omega \{f(\vec{X}(\theta),\omega ) \},
\]  
and then evaluate the Sobol' indices of $\hat g$.  This method is  used for
example in  \cite{IoossRibatet09,marrel,marrel3}.  In other words, the same
three steps noted above are used, but in a different order: 3, 1, 2. At the heart  of our
approach is the fact that, for appropriate surrogates, it is possible to
efficiently and directly compute  sensitivity information for the stochastic
model (\ref{stochmod}) without such a priori marginalizations.  Moreover,  {\em
computing $\omega$-moments} and {\em evaluating Sobol' indices}
for~\eqref{stochmod} are two operations that do not commute.  This simple
observation has significant consequences as averaging over $\omega$ before
computing sensitivity indices significantly reduces the amount of information
available for  analysis. This point is illustrated by the following example.

\begin{example}
\label{ex:synthetic}
Let  $(\Theta,\mathcal{E}, \lambda)$, $(\Omega, \mathcal{F}, \nu)$, be probability spaces, 
and $\mathbf X:\Theta \rightarrow \R^2$, 
$W:\Omega \rightarrow \mathbb R$ be random variables defined as follows.
We let $\mathbf X(\vartheta) = (\mu(\vartheta), \sigma(\vartheta))$ with $\mu \sim \mathcal U(0,1)$,
 $\sigma \sim \mathcal U(1,L+1)$ for some positive $L$ and
$W \sim \mathcal N(0,1)$, where $\mathcal U$ and $\mathcal N$ denote uniform and normal distributions, respectively.  
We consider an example of a stochastic model of the form~\eqref{stochmod} as follows:
\begin{eqnarray}
Y = f(\mathbf X(\vartheta),\omega) = \mu(\vartheta) + \sigma(\vartheta) W(\omega),  \label{toyprob}
\end{eqnarray}
For $L=0$, $\sigma$ is deterministic;
as the value of $L$ increases, so does the uncertainty on $\sigma$. We
therefore expect the importance of $\sigma$ to increase with $L$. This is
confirmed by direct calculations. The first order Sobol' indices of $Y$ with
respect to both $\mu$ and $\sigma$ can be found analytically 
\[
S_\mu(Y)(\omega) = \frac 1{1+L^2W(\omega)^2} \quad \mbox{ and }\quad  S_{\sigma}(Y)(\omega) =  \frac {L^2W(\omega)^2}{1+L^2W(\omega)^2},
\]
and the corresponding expected values are given by 
\[
\begin{aligned}
\mathbb E_{\omega}\{S_\mu(Y)\} &= 
\frac 1L \sqrt{\frac {\pi}2} \exp\left( \frac 1{2L^2}\right) \operatorname{erfc}\left(\frac 1{\sqrt{2}L}\right),   \\ 
\mathbb E_{\omega}\{S_\sigma(Y)\}  &= 1 - \mathbb E_{\omega}\{S_\mu(Y)\},
\end{aligned}
\]
where $ \operatorname{erfc}$ is the complementary error function. 
Figure~\ref{fig:toyprob} illustrates the behavior of $\mathbb E_{\omega}\{S_\sigma(Y)\} $  as a function of $L$ confirming the increasing importance of $\sigma$ with $L$. 
\begin{figure}[ht]
\centering
\includegraphics[width=.65\textwidth]{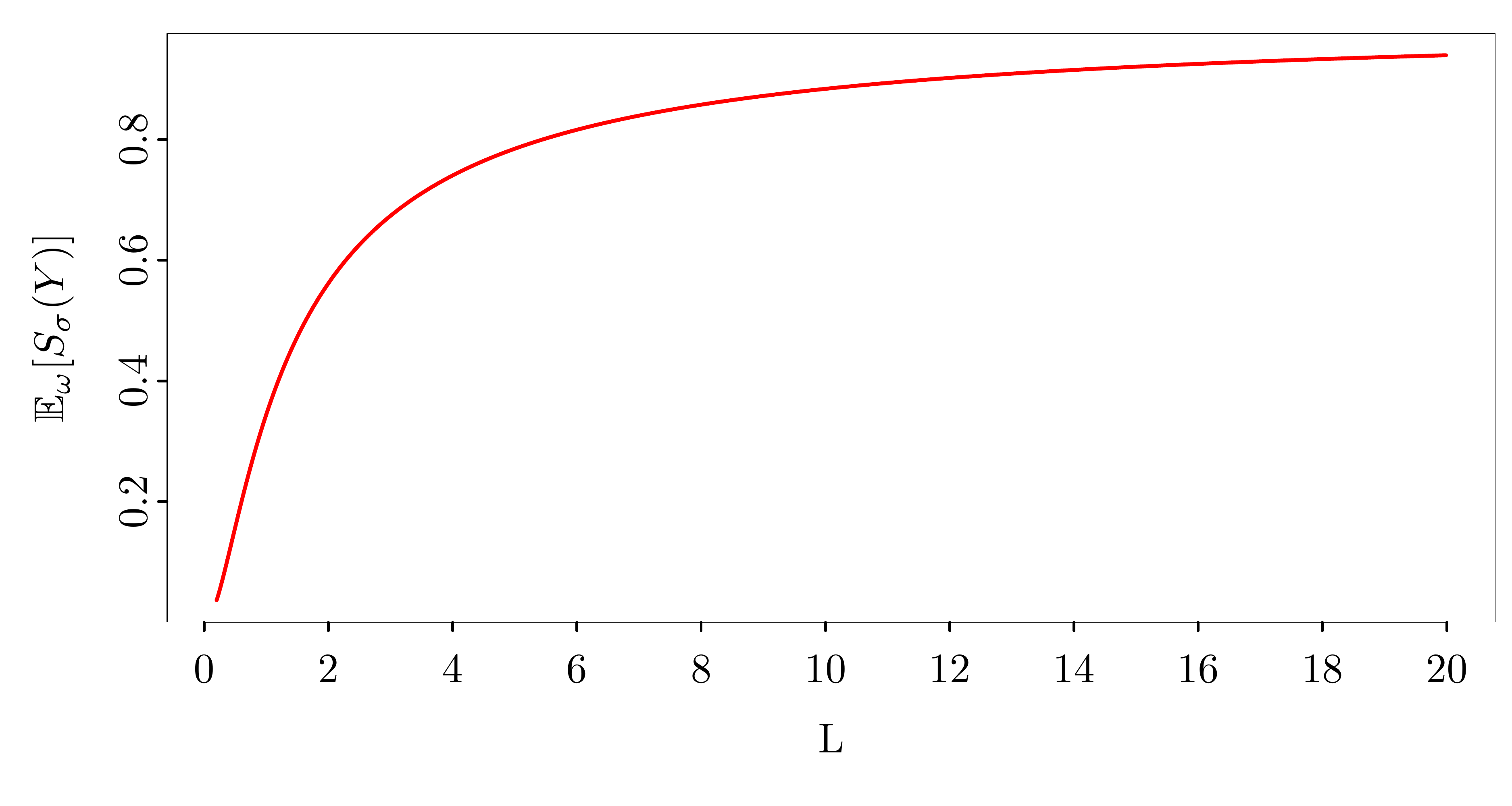}
\caption{Expected first order Sobol' index of $Y$ from (\ref{toyprob}) with respect to the uncertain parameter $\sigma$ as a function  of $L$.}
\label{fig:toyprob}
\end{figure}

Reversing the order of operations between averaging and computing the Sobol' indices leads
to an entirely different picture which is at odds with the very nature of (\ref{toyprob}). Indeed, the expected value of $Y$ with
respect to $\omega$ is simply $\mathbb E_{\omega}\{Y\} = \mu$ and therefore the
first order Sobol' indices are given by 
\[
S_{\mu}(\mathbb E_{\omega}\{Y\}) = 1 \quad \mbox{ and } \quad S_{\sigma}(\mathbb E_{\omega}\{Y\}) = 0.
\]
In other words, $\sigma$ appears insignificant regardless of $L$. 
\end{example}

A fair amount of recent work on  global sensitivity analysis for stochastic models has been directed toward 
the analysis of stochastic chemical systems \cite{DegasperiGilmore08,LeMaitreKnio15,lemaitrekm}. 
 In~\cite{LeMaitreKnio15}, for instance,  the authors develop a method, based on
polynomial chaos expansion and a stochastic Galerkin formalism, for
the analysis of variance of stochastic differential equations driven by additive
or multiplicative Wiener noise. In \cite{lemaitrekm},  a method is proposed for  the sensitivity analysis of
stochastic chemical systems to the different reaction channels and channel
interactions.   There has also been significant progress in \emph{local} (derivative-based)
sensitivity analysis for stochastic systems.  References~\cite{gunawan,kdn,nakayama,plyasunov,rathinam} provide a 
sample of such efforts.  Finally, other approaches for sensitivity analysis of
stochastic systems rely on  information
theory~\cite{aram,komo,majda1,majda2}. 

Surrogate models are an important component of our approach.  While our framework is agnostic to the choice of surrogates,
practical considerations such as ease of calculation and efficiency in high dimensions have to be taken 
into account. For many surrogates, the Sobol' indices can be evaluated at negligible cost or even
analytically. Further, since by construction  
\[
0\le S_u \le 1, \mbox{ for } u \subset \{1,2,\dots, p\}, 
\]
the moments of the Sobol' indices can be computed efficiently, as shown in 
Section~\ref{sec:formulation}. Therefore, the most expensive of the steps
1--3 mentioned above
is the construction of  the surrogate model itself. 

We rely on Multivariate Adaptive Regression Splines 
(MARS)~\cite{friedman91,friedman93,hastie2009} for surrogate construction. 
MARS is a nonparametric model which adaptively allocates basis functions. This
results in the surrogate itself
screening variables prior to the application of  a more sophisticated tool from sensitivity
analysis (such as Sobol' indices). MARS approximations tend to omit the less important variables. 
Consequently,  the Sobol' indices of the influential variables are ``biased high" while the indices of the 
less influential variables are ``biased low". This apparent flaw is in fact a benefit for global sensitivity analysis as
the accurate identification of influential vs non-influential variables is  the goal; the Sobol' indices 
are but a tool to obtain that information. We demonstrate the efficiency of MARS in the context of global sensitivity analysis 
\cite{storlie} by comparing it to polynomial chaos   \cite{Ghanem:1991a,LeMaitreKnio10}, see Section 2.

The proposed method is tested on two numerical examples  in Section~\ref{sec:numerical}. The first example is a synthetic problem based
on a stochastic version of the well-known g-function~\cite{sobol}. Inexpensive function evaluations 
and analytic expressions for the Sobol' indices facilitate the  systematic
assessment of the method. The second example
involves a stochastic biochemical reaction network exhibiting fast timescales
and an oscillatory behavior.    We illustrate the performance of our method by
estimating the oscillatory time dependent behavior of the Sobol' indices.

\section{Surrogate models for sensitivity analysis}
\label{sec:surrogate}
In this section, we focus on first order Sobol' indices and consider their
computation using surrogate models; similar analysis may be done for higher
order Sobol' indices. 

\subsection{Accuracy}

Let $f$ be as in~\eqref{detmod} and $\mathbf{S}=(S_1,
S_2, \dots, S_p)$ be its first order Sobol' indices.  Further, let
$\hat{\mathbf{S}}=(\hat{S}_1,\hat{S}_2,\dots, \hat{S}_p)$ be the corresponding
indices for a surrogate $\hat{f}$. Ideally, $\mathbf{S}$ and $\hat{\mathbf S}$
would lead to the identification of the same set of influential  variables;
various metrics can be considered for computing the discrepancy between these
two vectors of indices. It is often observed in practice that $\sum_{k=1}^p
\hat{S}_k>\sum_{k=1}^p S_k$. As the Sobol' indices measure the {\em relative}
importance of the variables in any given problem, we define  a corresponding
error $E$ through the following normalization

\begin{equation}\label{errordef}
E=\left\| \frac{\mathbf S}{\sum\limits_{k=1}^p S_k}-\frac{\hat{\mathbf S}}{\sum\limits_{k=1}^p \hat{S}_k}\right\|_{\infty}, 
\end{equation}
and  refer to ${\mathbf S}/(\sum_{k=1}^p S_k)$ as the normalized indices.  The
choice of the norm, here $\ell^{\infty}$, has little effects on the results
presented in this paper;  using the $\ell^1$ and $\ell^2$ norms lead to similar
conclusions. We are not aware of theoretical results regarding either $E$ from
(\ref{errordef}) or other similar error measures;  error assessment is however
discussed in \cite{janon14} for surrogates admitting local error bounds (which
is not the case of most methods from non-parametric statistics including MARS).

\subsection{MARS}

Let $f$ and $\vec{X}$ be as in~\eqref{detmod}. We assume $f \in L^2(\mathcal{X}, \mathcal{B}(\mathcal{X}), F_\vec{X})$,
where $F_\vec{X}$ denotes the distribution function of $\vec{X}$,
$\mathcal{X} \subseteq \R^p$ is the support of the distribution law of $\vec{X}$, and $\mathcal{B}(\mathcal{X})$ is the Borel sigma-algebra on $\mathcal{X}$. MARS \cite{friedman91,friedman93} approximations to $f$ are constructed through an
adaptive regression procedure involving truncated one-sided linear splines and
products thereof. More precisely, let 
\[
\mathcal C = \{ (x_j - t)_+, (t - x_j)_+ : t \in \{x_{i,j}\}, i = 1, \dots , n, j =1, \dots, p\}
\]
be a set of $2np$ elementary functions (assuming distinct input values), where
$\{x_{i,j}\}$ is the set of available data and, for $\xi \in \R$, $\xi _+ = \max\{0,\xi\}$.
The model is of the form 
\begin{eqnarray}
\hat f(\mathbf x ) = \beta _0 + \sum_{m=1}^M \beta _m \, \phi_m(\mathbf x), \label{mars}
\end{eqnarray}
where  $\Phi = \{ \phi_1,\phi_2,...,\phi_M \}$ is the basis (constructed in
algorithm \ref{MARSbasis}) and the $\beta_m$'s are obtained through standard
linear regression.

\begin{algorithm}
\caption{MARS basis}\label{MARSbasis}
\begin{algorithmic}
   \State $\Phi = \{1\}$
   \While{$|\Phi| \le \mbox{max size}$ (\mbox{max size} $> M$)}
      \State{find $(\ell ^\star, j^\star, i^\star)$ corresponding to the best approximation of the form}
      \State{$\hat{f}(\mathbf x) \in \mbox{span}\{\Phi,\phi_\ell(\mathbf x)(x_j-t)_+, \phi_\ell(\mathbf x)(t-x_j)_+$ : $\phi_\ell \in \Phi$, $t\in\{x_{i,j}\} \}$}
      \State{$\Phi = \Phi \cup \{ \phi_{\ell^\star}(\mathbf x)(x_{j^\star}-x_{i^\star,j^\star})_+,\phi_{\ell^\star}(\mathbf x)(x_{i^\star,j^\star}-x_{j^\star})_+ \}$}
   \EndWhile
   \State{$\Phi = \{\phi_1,\phi_2,...,\phi_{|\Phi|} \}$}
   \While{$|\Phi| > M$}
      \State{find j corresponding to the best approximation of the form}
      \State{$\hat{f}(\mathbf x) \in \mbox{span}\{ \Phi \setminus \{ \phi_j \} \} $}
      \State{$\Phi=\Phi \setminus \{ \phi_j \} $}
   \EndWhile
   \vspace{3 mm}
   \State Note: $M$ is chosen by the algorithm, not  by the user. Statistical tests are used to determine when to end the while loops \cite{friedman91}. We use $M$  to simplify the presentation of the algorithm which is more complex in its actual implementation.
   \end{algorithmic}
\end{algorithm}

MARS is often used with the lowest degree of interaction  \cite{earth}, namely one,
in which case it  corresponds to an additive model. This is the approach we
adopt below. We use the R function {\sc{earth}} \cite{earth} to build MARS
surrogates. The additive structure of a MARS surrogate
with degree of interaction one enables analytic computation of the Sobol'
indices.  As we work exclusively with additive MARS surrogates,
we  focus on first order Sobol' indices--higher order Sobol' indices would carry no additional information \cite{sobol93}.  
However,  higher order and total Sobol' indices could also be considered
in the proposed framework provided the  surrogate model
incorporates mixed terms, i.e.,  interactions between different
uncertain parameters.

The additive MARS model can be represented as
\begin{equation}\label{marsadd}
\hat f(\mathbf x ) = \beta _0 + \sum_{k=1}^p \sum_{j=1}^{M_k} \beta _{k,j} \, \phi_{k,j}(x_k), 
\end{equation}
where $M_k$
is the number of basis functions depending on $x_k$.  
Let $I_{k,j}=\int \phi_{k,j}dF_{\mathbf x}$ be the mean of $\phi_{k,j}$; we can then rewrite (\ref{marsadd})  as

\begin{eqnarray}
\hat f(\mathbf x ) = \left(\beta _0+\sum_{k=1}^p \sum_{j=1}^{M_k} \beta_{k,j}I_{k,j}\right) + \sum_{k=1}^p \sum_{j=1}^{M_k} \beta _{k,j} \, (\phi_{k,j}(x_k)-I_{k,j}), \label{marsanova}
\end{eqnarray}
which is the ANOVA decomposition of $\hat f$. The first order Sobol' indices
are obtained analytically by computing $V_k=\int\left( \sum_{j=1}^{M_k} \beta
_{k,j} \, (\phi_{k,j}(x_k)-I_{k,j}) \right)^2dF_{\mathbf x}$ for $k=1,\dots p$
and setting $S_j={V_j}/{(\sum_{k=1}^p V_k)}$.

\subsection{A numerical example}
In this subsection we demonstrate the utility of MARS to compute Sobol' indices
for problems of the form~\eqref{detmod}.  To this end, we compare results
obtained using MARS against those computed using a polynomial chaos (PC)
expansion, which is a well known tool for constructing surrogate models. 

Before presenting the numerical test, we briefly recall some basics regarding
PC expansions.  The PC expansion of $f \in L^2(\mathcal{X},
\mathcal{B}(\mathcal{X}), F_\vec{X})$ is a series expansion of the type $f =
\sum_{k = 0}^\infty c_k \Psi_k$, where $\{\Psi_k\}_0^{\infty}$ is a set of
$p$-variate polynomials forming an orthogonal basis of $L^2(\mathcal{X},
\mathcal{B}(\mathcal{X}), F_\vec{X})$. The PC basis is dictated by the
statistical distribution  of the uncertain parameters $X_1, \ldots, X_p$.  For
example, if $X_1, \ldots, X_p$ are iid uniform random variables,  the PC basis
can be taken as $p$-variate Legendre polynomials. 
Implementation is done through  truncated expansions of the form \newcommand{\Npc}{{n_\text{pc}}}
\begin{equation}\label{equ:PCE}
f \approx \sum_{k = 0}^{\Npc} c_k \Psi_k,
\end{equation}
where the number of retained basis functions $\Npc$
depends on the truncation strategy. For instance,  the case of basis functions of total order not exceeding $r$
results in 
$\Npc = (p + r)! / (p! r!)$.  

Computation of PC coefficients can be a difficult problem for computationally
extensive models~\cite{LeMaitreKnio10,Xiu10}. This has led to development
of various efficient approaches for computing PC expansions for computationally
intensive mathematical models in recent years; 
see e.g.,~\cite{BlatmanSudret11,DoostanOwhadi11,ConradMarzouk13,WinokurKimBisettiEtAl14,
JakemanEldredSargsyan15}.

For the numerical illustrations below, we compute the PC expansion through  a regression
based method that encourages sparsity  by controlling the $\ell_1$ norm of the
PC coefficient vector
\begin{equation}\label{equ:optim}
   \min_{\vec{c} \in \R^{\Npc}} \| \boldsymbol{\Lambda} \vec{c} - \vec{d} \|^2, \\ 
      \qquad \text{subject to }  \frac{1}{\Npc} \sum_{k = 0}^\Npc |c_k| \leq \tau,
 \end{equation}
 where $\boldsymbol{\Lambda} \in \R^{n\times\Npc}$, $\Lambda_{jk} = \Psi_k(\vec{X}_j)$, and $\vec{d} = \big(f(\vec{X}_1), \ldots, f(\vec{X}_n)\big)$. 
We use the solver~\cite{spgl1:2007} for the
solution of the above optimization problem, with $\tau = 0.025$, and compute a
third order PC expansion.  For the purposes of sensitivity analysis, once a PC
expansion is available, the Sobol' indices can be computed analytically
\cite{AlexanderianEtAl12,Crestaux:2009,Sudret:2008}.

For our comparison we consider the classical $g$-function initially proposed by
I. Sobol' in \cite{sobol};  this corresponds to the synthetic function
(\ref{sgfunction}) of Section~\ref{sec:numerical} with the random parameters
replaced by their expected values, i.e., 

\begin{eqnarray}
f(\vec{X}(\vartheta)) = \prod\limits_{k=1}^{15} \frac{|4X_k(\vartheta)-2|+\mathbb E_{\omega}\{a_k\}}{1+\mathbb E_{\omega}\{a_k\}},  \label{detgfun}
\end{eqnarray}
where the $a_k$'s are given in Table~\ref{tab:aj}. We construct MARS and PC surrogates for sensitivity analysis by 
sampling the model $n$ times, through a Latin Hypercube design, with 
$n=100,150,200,\dots, 950,1000$. Because of randomness in the data
sampling, the experiment is repeated 500 times for each fixed $n$ and the
errors (\ref{errordef}) are averaged. 
Figure~\ref{fig:MARSPCE} (left) displays  the resulting average error $\bar{E}$
as a function of $n$.  These errors can be interpreted by considering
Figure~\ref{fig:MARSPCE} (right) in which we show the normalized exact indices
alongside their normalized MARS and  PC approximations for a representative sample of
size  $n=600$. 
\begin{figure}[ht]\centering
\includegraphics[width=.46\textwidth]{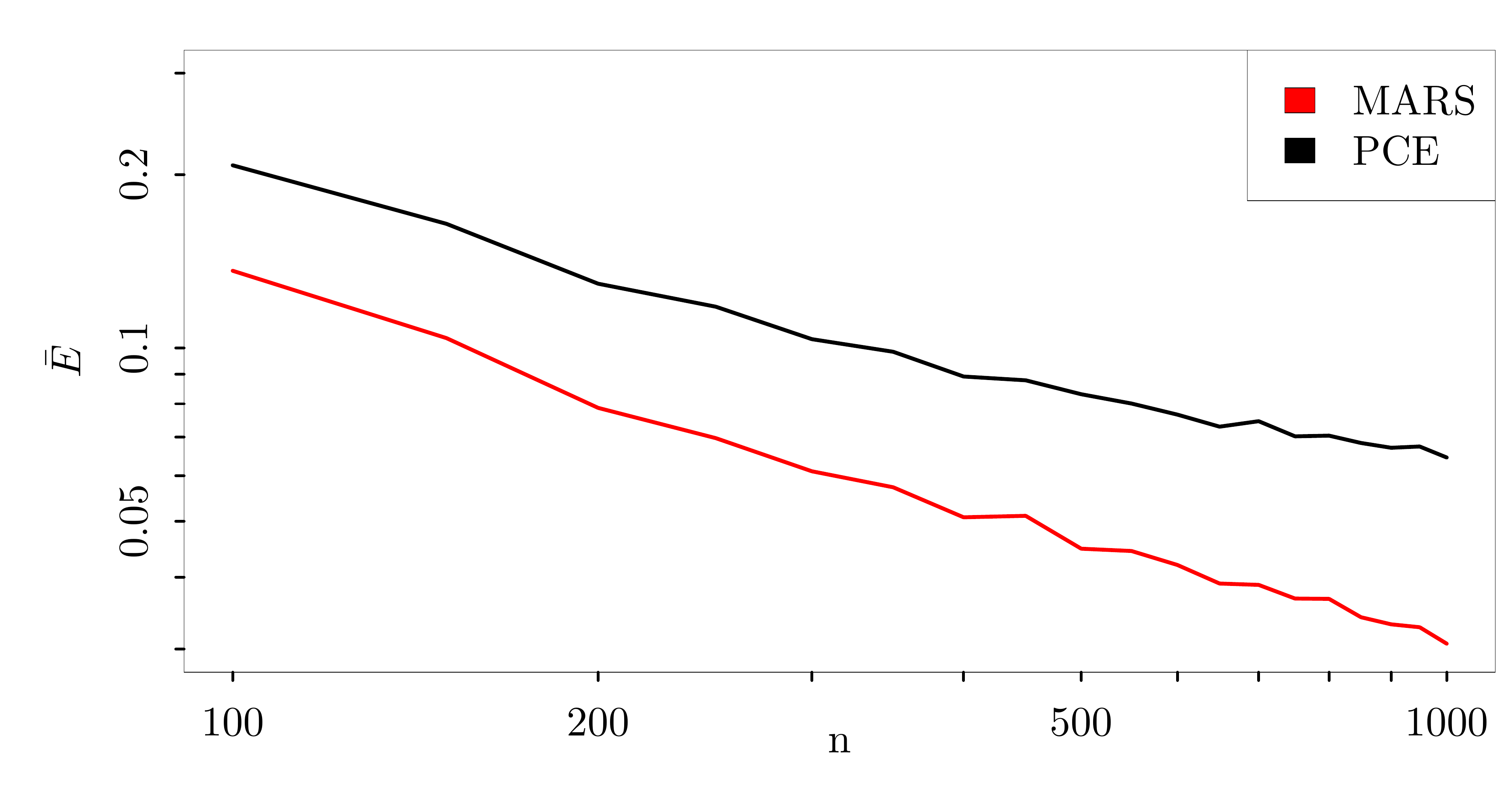}
\includegraphics[width=0.46\textwidth]{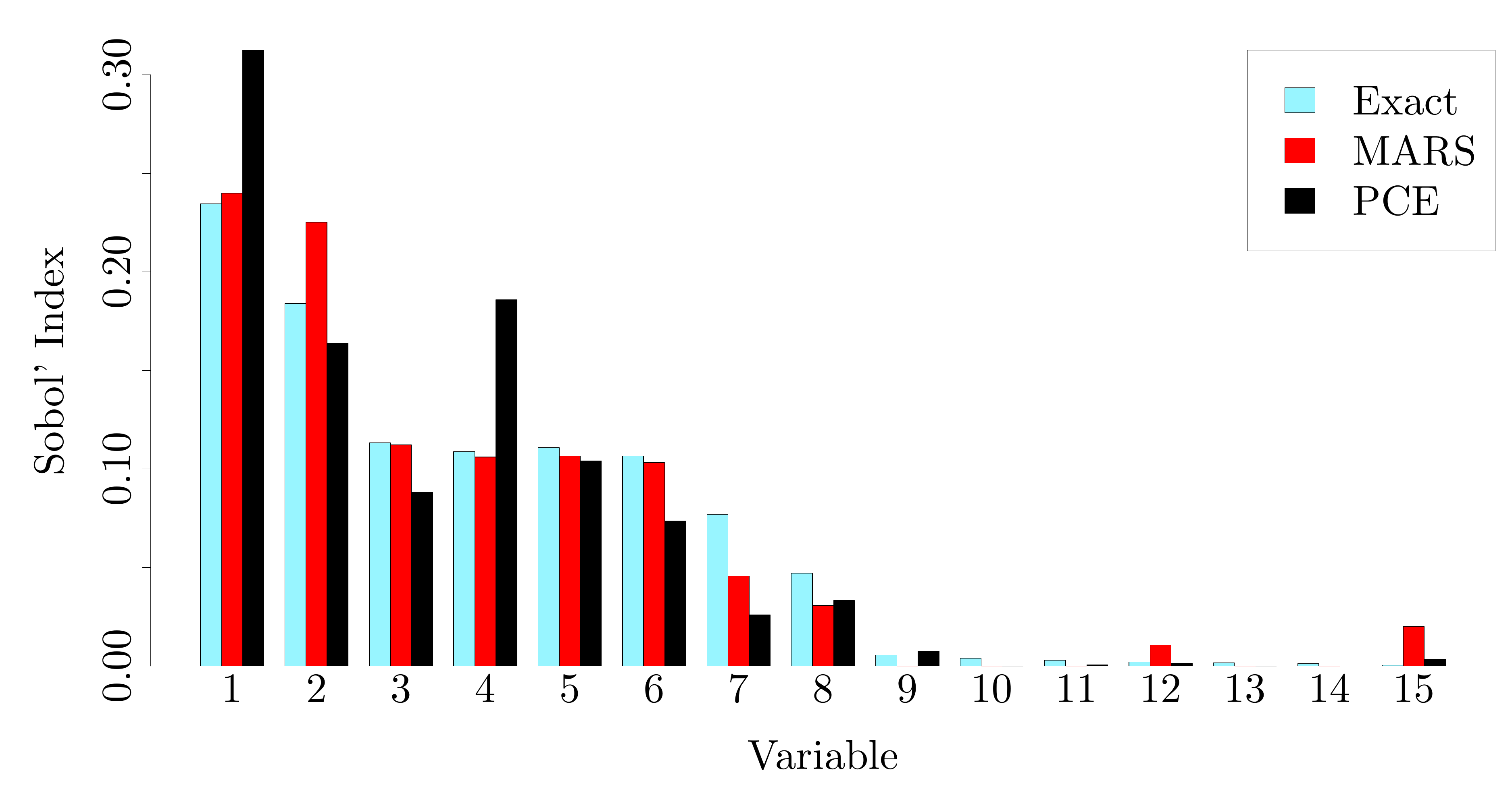}
\caption{ 
Left: average errors from   \eqref{errordef} for the Sobol indices.
The empirical convergence rates 
are 0.65 for MARS and 0.51  for PC.
Right: bar plots comparing the normalized exact indices with their normalized MARS and  PC approximations for a representative sample of size $n=600$.
}
\label{fig:MARSPCE}
\end{figure}

We observe that in the case of (\ref{detgfun}) and  for the implementations
described above, MARS and PC perform comparably for the purpose of global
sensitivity analysis when provided an equal number of function evaluations.  In
particular,  the average error improves for both methods as the sample size is
increased. We note that, in the present example, MARS is found
to have a slight edge in terms of flexibility and accuracy, when compared to
PC-based computation of Sobol indices using the above implementations.


\section{Formulation and Method}
\label{sec:formulation}

We start by providing a precise mathematical definition of
the function $f$ in~\eqref{stochmod}.
Let $(\Theta, \mathcal{E}, \lambda)$
be the probability space  associated with the uncertain parameters in (\ref{stochmod}) and $\vec{X}:\Theta \to
\R^p$ be the corresponding random vector. In addition, we consider another
probability space, $(\Omega, \mathcal{F}, \nu)$, that carries the stochasticity
of a model. 
The corresponding product space 
\[
(\Theta, \mathcal{E}, \lambda) \otimes (\Omega,
\mathcal{F}, \nu) = (\Theta \times \Omega, \mathcal{E} \otimes \mathcal{F},
\lambda \otimes \nu),
\]
 can be constructed in a standard fashion: $\mathcal{E} \otimes \mathcal{F}$ is the product
$\sigma$-algebra and $\lambda \otimes \nu$ is the product measure. 
We let $f$ be a function defined on $\mathcal{X} \times \Omega$, where 
$\mathcal{X} \subseteq \R^p$ is the support of the distribution law of
$\vec{X}$. For the type of stochastic problems we consider here, the response functions  are of the form
$Y(\vartheta, \omega) = f(\vec{X}(\vartheta), \omega)$.  We assume $Y:\Theta
\times \Omega \to \R$ belongs to $L^2(\Theta \times \Omega, \mathcal{E} \otimes
\mathcal{F}, \lambda \otimes \nu)$.
For a fixed $\omega \in \Omega$, we may consider $f(\vec{X}(\cdot),\omega) : \Theta
\to \mathbb R$ as a deterministic function of uncertain parameters and compute the Sobol' indices \eqref{Sobolindex} for each $\omega \in \Omega$. This defines the functions
\begin{eqnarray*}
S_u: \Omega &\to& [0,1], \\
\omega &\mapsto & S_u(f(\vec{X}(\cdot),\omega)), \qquad u \subset \{1,2,\dots,p\}.
\end{eqnarray*}
Invoking the remarks in Theorem 1.7.2 of \cite{durrett} and elementary properties of 
measurable functions, it can be  seen that the $S_u$'s   are
$\mathcal{F}$-measurable functions, i.e., they are random variables. 

As illustrated by Example~\ref{ex:synthetic}, the distribution of the indices
$\{S_u(\omega)\}_{u \subset \{1,2,\dots,p\} }$ may contain significant information needed for sensitivity
analysis. Computing these sensitivity indices is, in general,  costly. For
instance, to compute all $p$ first order indices $\{S_k(\omega)\}_{k=1}^p$ through the sampling based
method from \cite{saltelli} with $N$ Monte Carlo samples  requires $(p+1)N$
evaluations of $f(\vec{X}(\cdot),\omega)$ for each fixed $\omega \in \Omega$.
To characterize the statistical properties of the  indices, an additional Monte
Carlo sampling over $\Omega$ has to be performed.  Assuming a sample size of
$m$ in $\Omega$ leads to  a total of 
\begin{equation}\label{equ:Tslow}
T_\text{samp} = m(p+1)N
\end{equation}
evaluations of the stochastic response function $f$. Such a cost is prohibitive
in many applications where $N$ might be  of the order of tens of thousands.


We use surrogate models to reduce the cost. Namely, for each fixed $\omega$, we
construct  $\hat{f}(\vec{X}, \omega) \approx f(\vec{X}, \omega)$.  The
construction of the surrogate $\hat{f}$ requires an ensemble of function
evaluations, $\{ f(\vec{X}_j, \omega) \}_{j=1}^{n}$, where
$\{\vec{X}_j\}_{j=1}^n$ are realizations of the uncertain parameters, drawn
from the distribution law $F_\vec{X}$. 

The construction of an efficient surrogate only requires $n$ function evaluations where $n$ is much smaller than the number of 
Monte Carlo samples, i.e., $n \ll N$. In addition, and as noted
earlier, most surrogates allow inexpensive or even analytic calculation of the
Sobol' indices. This reduces the total cost from~\eqref{equ:Tslow} to 
\begin{equation}\label{equ:Tfast}
T_\text{surrogate} =  m n
\end{equation}
evaluations of $f$, where $n$ is significantly smaller than $(p+1)N$.  
While the computational cost~\eqref{equ:Tfast} appears to be independent of the
uncertain parameter dimension $p$,  it should be noted that the choice of $n$ depends on $p$. This dependence 
is linked to the surrogate model itself.  An
adaptive surrogate model such as MARS can exploit the problem structure and
thus tempers this dependence.

Higher order indices may also be computed and similar cost analysis may be done. We summarize the main steps of our method for computing general sensitivity indices for
stochastic models in Algorithm~\ref{alg:method}.

\begin{algorithm}
\caption{Efficient approximation of $\{S_u\}_{u \subset \{1,2,\dots,p\}}$}
\begin{algorithmic}
\For{$i$ from $1$ to $m$}
\State Randomly generate $\omega_i \in \Omega$
\State Generate realizations $\{\vec{X}_j\}_{j=1}^n$ of the uncertain parameter vector 
\State Evaluate $f(\vec{X}_j, \omega_i)$, for each $j = 1, \ldots, n$
\State Construct surrogate $\hat{f}(\vec{X},\omega_i) \approx f(\vec{X},\omega_i)$ using 
data set $(\vec{X}_j, f(\vec{X}_j, \omega_i))_{j=1}^n$
\State Compute $\hat{S}_u(\omega_i)$ using the surrogate $\hat{f}(\vec{X},\omega_i)$ for $u \subset \{1,2,\dots,p\}$
\EndFor
\State Approximate statistical properties of $S_u$ using $\{\hat{S}_u(\omega_i)\}_{i=1}^m$ for  $u \subset \{1,2,\dots,p\}$
\end{algorithmic}
\label{alg:method}
\end{algorithm}

The algorithm returns $m$ realizations of Sobol' indices of $\hat{f}(\vec{X},\omega)$, i.e., $\hat S_u(\omega)$,  $u \subset \{1,2,\dots,p\}$.  
Let us denote these realizations $\hat{S}_u^i \stackrel{iid}{\sim} \hat{S}_u$, 
$i = 1, \ldots, m$,  $u \subset \{1,2,\dots,p\}$ and consider the sample $r$-th moment
\begin{equation}
\hat{\mu}_u^{[r]}(\omega) = \frac{1}{m} \sum_{i=1}^m (\hat{S}_u^i(\omega))^r. \label{mudef}
\end{equation}
Clearly, we have
\[
\mathbb E_{\omega}\{\hat{\mu}_u^{[r]}\} =\mathbb E_{\omega}\{(\hat{S}_u)^r\}
\quad
\text{and}
\quad
\operatorname{Var}_\omega\{\hat{\mu}_u^{[r]}\}=\frac{\operatorname{Var_\omega}\{(\hat{S}_u)^{r}\}}{m},
\quad u \subset \{1,2,\dots,p\}.
\]
The error in approximating $\mathbb E_{\omega}\{(S_u)^r\}$
can be decomposed into Monte Carlo error using
$m$ samples from $\Omega$ and surrogate approximation error using $n$ samples from $\Theta$.

\begin{proposition}\label{prp:estimates}
Let $\hat{\mu}_u^{[r]}$, $S_u$, and $\hat{S}_u$ be as defined above. Then, 
\begin{enumerate}
\item $\displaystyle \mathbb E_{\omega}\{\hat{\mu}_u^{[r]}-\mathbb E_{\omega}\{(S_u)^r\} \}=\mathbb E_{\omega}\{(\hat{S}_u)^r-(S_u)^r\}$,
\item $\displaystyle 
\operatorname{Var}_\omega\{\hat{\mu}_u^{[r]}-\mathbb E_{\omega}\{(S_u)^r\}\} \le 
   {\mathbb E_{\omega}\{(\hat{S}_u)^r\}(1-\mathbb E_{\omega}\{(\hat{S}_u)^r\})}/{m} \le 
   \frac 1{4m}$.
\end{enumerate}
\end{proposition}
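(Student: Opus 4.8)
The plan is to handle the two parts separately, leaning on the two identities for $\mathbb E_{\omega}\{\hat{\mu}_u^{[r]}\}$ and $\operatorname{Var}_\omega\{\hat{\mu}_u^{[r]}\}$ already recorded just above the statement, together with the boundedness $0 \le S_u, \hat{S}_u \le 1$ of the Sobol' indices.

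For part 1, I would observe that $\mathbb E_{\omega}\{(S_u)^r\}$ is a deterministic constant, so subtracting it commutes with $\mathbb E_{\omega}$. By linearity, $\mathbb E_{\omega}\{\hat{\mu}_u^{[r]} - \mathbb E_{\omega}\{(S_u)^r\}\} = \mathbb E_{\omega}\{\hat{\mu}_u^{[r]}\} - \mathbb E_{\omega}\{(S_u)^r\}$; substituting the identity $\mathbb E_{\omega}\{\hat{\mu}_u^{[r]}\} = \mathbb E_{\omega}\{(\hat{S}_u)^r\}$ and folding the two expectations back into one yields the claim. This part is essentially bookkeeping.

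For part 2, I would first note that shifting a random variable by a constant leaves its variance unchanged, so that $\operatorname{Var}_\omega\{\hat{\mu}_u^{[r]} - \mathbb E_{\omega}\{(S_u)^r\}\} = \operatorname{Var}_\omega\{\hat{\mu}_u^{[r]}\} = \operatorname{Var}_\omega\{(\hat{S}_u)^r\}/m$ by the recorded variance identity. It then remains to bound $\operatorname{Var}_\omega\{(\hat{S}_u)^r\}$. Writing $Z = (\hat{S}_u)^r$ and $p = \mathbb E_{\omega}\{Z\}$, the key observation is that $Z \in [0,1]$ (since $\hat{S}_u \in [0,1]$), which forces $Z^2 \le Z$ pointwise and hence $\mathbb E_{\omega}\{Z^2\} \le \mathbb E_{\omega}\{Z\} = p$. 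Therefore
\[
\operatorname{Var}_\omega\{Z\} = \mathbb E_{\omega}\{Z^2\} - p^2 \le p - p^2 = p(1-p),
\]
which, after dividing by $m$, is the first inequality in part 2. The final bound then follows from $p(1-p) = \tfrac14 - (p - \tfrac12)^2 \le \tfrac14$.

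There is no real obstacle here; the only substantive input is the boundedness of the Sobol' indices in $[0,1]$, which is precisely what turns a generic variance into the Bernoulli-type bound $p(1-p)$ and, in turn, the uniform bound $1/(4m)$.
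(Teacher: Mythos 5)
Your proof is correct and follows essentially the same route as the paper: part 1 by linearity and the recorded identity $\mathbb E_{\omega}\{\hat{\mu}_u^{[r]}\}=\mathbb E_{\omega}\{(\hat{S}_u)^r\}$, and part 2 by translation-invariance of the variance plus the bound $\operatorname{Var}_\omega\{Z\}\le \mathbb E_\omega\{Z\}(1-\mathbb E_\omega\{Z\})$ for $Z=(\hat S_u)^r\in[0,1]$. The only difference is that the paper simply cites this last inequality (Theorem 2 of its reference on variance bounds for bounded random variables), whereas you supply the short self-contained derivation via $Z^2\le Z$, which is a welcome but minor improvement in self-containedness.
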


\begin{proof}
The first statement follows from
\begin{multline*}
\mathbb E_{\omega}\{\hat{\mu}_u^{[r]}-\mathbb E_{\omega}\{(S_u)^r\}\} = 
\mathbb E_{\omega}\{\hat{\mu}_u^{[r]}-\mathbb E_{\omega}\{(\hat{S}_u)^r\}+\mathbb E_{\omega}\{(\hat{S}_u)^r\}-\mathbb E_{\omega}\{(S_u)^r\}\}\\
=\left(\mathbb  E_{\omega}\{\hat{\mu}_u^{[r]}\}-\mathbb E_{\omega}\{(\hat{S}_u)^r\} \right)+\mathbb E_{\omega}\{(\hat{S}_u)^r-(S_u)^r\}
=\mathbb E_{\omega}\{(\hat{S}_u)^r-(S_u)^r\}.
\end{multline*}

For the second statement, we note
\[
\begin{aligned}
\operatorname{Var}_\omega\{\hat{\mu}_u^{[r]}-\mathbb E_{\omega}\{(S_u)^r\}\} &= \operatorname{Var}_\omega\{\hat{\mu}_u^{[r]}\}\\
&= \frac{\operatorname{Var}_\omega\{(\hat{S}_u)^r\}}{m}
  \le \frac{\mathbb E_{\omega}\{(\hat{S}_u)^r\}(1-\mathbb E_{\omega}\{(\hat{S}_u)^r\})}{m}
  \le \frac{1}{4m},
\end{aligned}
\]
where the inequalities follow from the Theorem 2 in
\cite{inequal} and the fact that $\hat{S}_u$ is
supported on $[0,1]$. 
\end{proof}

To understand the implication of the above result,  consider the
point estimator for the expected value of $S_u(\omega)$ given by the sample mean:
\begin{equation}\label{equ:sample_avg}
   \hat{\mu}_u^{[1]}(\omega) = \frac{1}{m} \sum_{i=1}^m \hat{S}_u^i(\omega).
\end{equation}
Proposition~\ref{prp:estimates} characterizes the bias of this estimator
as the approximation error due to the surrogate, i.e., $\mathbb E_{\omega}\{\hat{S}_u-S_u\}$.
Further, since,  $\mathbb E_{\omega}\{(\hat{S}_u)^r\} \in [0,1]$, for every $r \ge 1$,
 the second statement of Proposition~\ref{prp:estimates} indicates that even a modest value of $m$, 
say in the order of a few hundreds, can be very effective in obtaining an estimator with 
small variance. Finally,   an estimate of the error can be obtained in the $L^2$ norm by using the elementary definition
of the variance and Proposition~\ref{prp:estimates}
\[
\mathbb E_{\omega}\{(\hat{\mu}_u^{[1]}-\mathbb E_{\omega}\{S_u\} )^2\} \le \mathbb E_{\omega}\{\hat{S}_u-S_u\}^2 + \frac 1{4m}.
\]

\section{Numerical results}
\label{sec:numerical}

The following two examples illustrate some of the points raised in the previous section:
(i) the convergence of the estimators  as a function of $n$
(number of samples to build  MARS) and $m$ (number of samples over $\Omega$) and 
(ii) the effect of the surrogate bias on the
statistical distribution of the Sobol' indices.

\subsection{The stochastic g-function} \label{sec:gfunc}
Let $(\Theta,\mathcal{E}, \lambda)$ and  $(\Omega, \mathcal{F}, \nu)$ be two probability spaces and let $\mathbf X:\Theta \rightarrow \mathbb R^{15}$ and $W:\Omega \rightarrow \mathbb R$ be two random variables such that 
\begin{eqnarray*}
&&\mathbf X = [X_1, \dots, X_p] \mbox{ with } X_i  \stackrel{iid}{\sim} \mathcal U(0,1), i = 1,\dots, p, \\
&&W \sim \operatorname{Beta}(5,3); 
\end{eqnarray*}
in other words, we have
\begin{eqnarray*}
&&\lambda\bigl(\mathbf X \in (c_1,d_1)\times \cdots \times (c_{15},d_{15})\bigr) = \prod\limits_{k=1}^{15} (d_k-c_k) \mbox{ for any } 0 \le c_k\le d_k\le 1, 
k=1,\dots, p,\\
&&\nu(W \in (a,b)) = K \int_a^b   t^{4}(1-t)^{2} \, dt \mbox{ for any $a$, $b$},  0 \le a \le b \le 1,
\end{eqnarray*}
with a normalization factor  
$K = \Gamma(5+3)/\big(\Gamma(5)\Gamma(3)\big) = 105$.

We now define a stochastic version of the g-function 
\begin{eqnarray}
f(\mathbf X(\vartheta),\omega)=\prod\limits_{k=1}^{15} \frac{|4X_k(\vartheta)-2|+a_k(W(\omega))}{1+a_k(W(\omega))}, \label{sgfunction}
\end{eqnarray}
where $X_k(\vartheta)$ is the $k^{th}$ component of $\mathbf X(\vartheta)$ and the parameters $a_k:
[0,1] \to \mathbb R$, $k = 1,\dots, 15$, are chosen to create a
variety of means and variances for the Sobol' indices. Analytic expressions for
the $a_k$'s are given in Table~\ref{tab:aj}.

\begin{table}[ht]
\centering
\ra{1.3}
\begin{tabular}{lll}
\toprule
$a_1(t)=(1-t)^5$ & $a_2(t)=t^5$ & $a_3(t)=\sin^2(8t)$\\
$a_4(t)=\sin^2(10(1-t))$ & $a_5(t)=\cos^2(10(1-t))$ & $a_6(t)=\cos^2(8t)$\\
$a_7(t)=(1.5-t)^2$ & $a_8(t)=(.5+t)^2$ & $a_9(t)=(3-t)^2$\\
$a_{10}(t)=(2+t)^2$ & $a_{11}(t)=(3.5-t)^2$ & $a_{12}(t)=(2.5+t)^2$\\
$a_{13}(t)=(4-t)^2$ & $a_{14}(t)=(3+t)^2$ &  $a_{15}(t)=(4+t)^2$\\
\bottomrule
\end{tabular}
\caption{Expressions of the parameters $a_k$, $k=1, \ldots, 15$, for the stochastic g-function example (\ref{sgfunction}).}
\label{tab:aj}
\end{table}

We compute the $S_k$'s analytically and subsequently evaluate $\mathbb E_{\omega}\{S_k\}$, $k = 1, \dots, 15$, 
using numerical quadratures. The trapezoidal rule with $10^6$
quadrature nodes is used to ensure accurate computation of the expectations. 

\begin{figure}[ht]
\centering
\includegraphics[width=.49\textwidth]{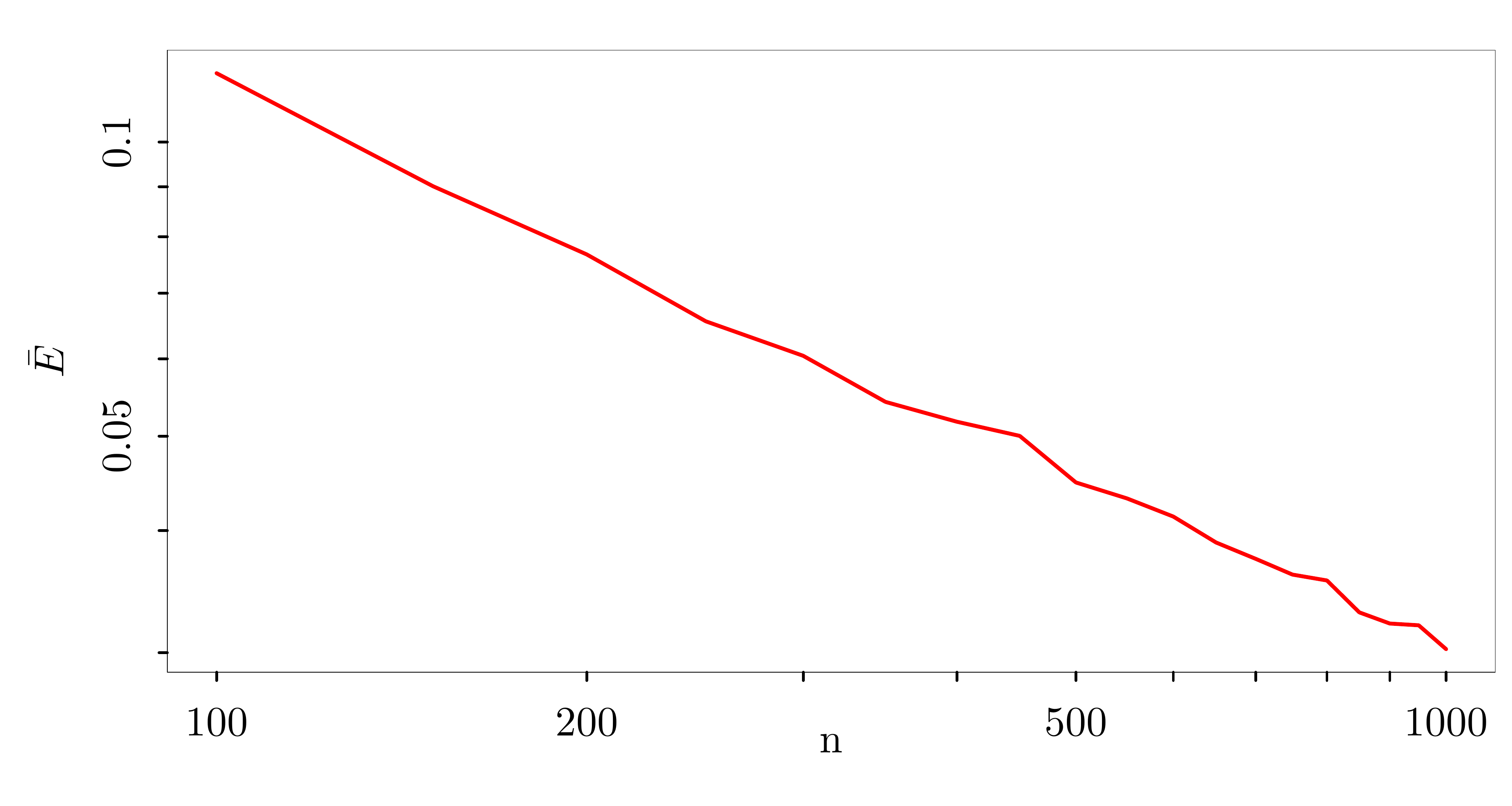}
\includegraphics[width=.49\textwidth]{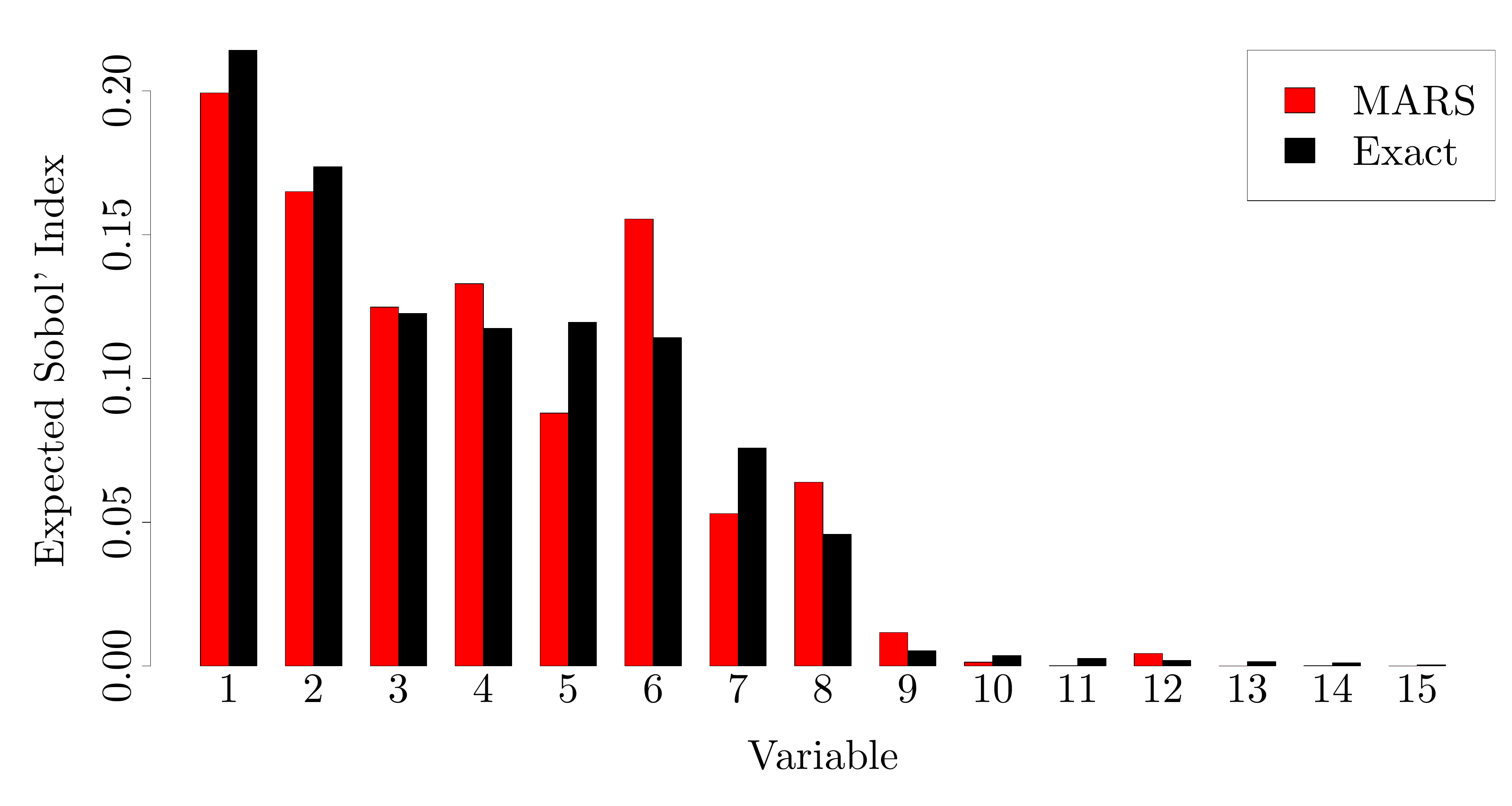}
\caption{Convergence of the expected Sobol' indices for the stochastic g-function (\ref{sgfunction}). Left: average error $\bar E$ (\ref{errordefstoch}) in the normalized expectation  of the indices as the surrogate sampling size $n$ varies; the  empirical convergence rate is 0.59. Right:
comparison of normalized expectation of the  indices for a representative sample size of   $n=600$.
}
\label{fig:varydata_barplot}
\end{figure}

In our first test, we study the error approximating $\mathbb E_{\omega}\{S_k\}$, $k =1,\dots,15$,  
as a function of the number of surrogate samples
$n$. For $n=100,150,\dots,950,1000$, the $\hat{\mu}_k^{[1]}$'s from (\ref{equ:sample_avg}) are obtained from Algorithm~\ref{alg:method}. Samples from $\Theta$ are taken using a Latin Hypercube design. To remove dependence upon sampling, we generate $500$ different datasets for each fixed $n$ and define the error as the average errors over these datasets
\begin{eqnarray}
\bar{E}=\frac{1}{500} \sum_{\ell=1}^{500} \left\| \frac{\mathbb E_{\omega}\{\mathbf S\}}{\sum\limits_{k=1}^p \mathbb E_{\omega}\{S_k\}}-
\frac{\hat{\bm \mu}(\ell,n)}{\sum\limits_{k=1}^p \hat{\mu}_k(\ell,n)}\right\|_{\infty},  \label{errordefstoch}
\end{eqnarray}
where
$\hat{\bm\mu}(\ell,n)=[\hat{\mu}_{1}(\ell,n),\hat{\mu}_{2}(\ell,n),\dots,\hat{\mu}_{p}(\ell,n)]$ and each $\hat{\mu}_{k}(\ell,n)$
is a realization of the random variable $\hat{\mu}_k^{[1]}$ using the $\ell^{th}$ dataset of size $n$. 
Figure~\ref{fig:varydata_barplot} (left) shows the error $\bar E$ as a function
of $n$ while Figure~\ref{fig:varydata_barplot} (right) compares the normalized
expected Sobol' indices of MARS with the normalized exact indices for a representative sample of size $n=600$.
We study the effect of $m$ in~\eqref{equ:sample_avg} in Figure~\ref{fig:conv_in_m}, which shows the
convergence of $\hat{\mu}_1^{[1]}$ and $\hat{\mu}_3^{[1]}$ as the number of
samples $m$ increases. The results in Figure~\ref{fig:conv_in_m} are computed using the sample
of size $n=600$ from Figure~\ref{fig:varydata_barplot} (right). The first and
third variables are chosen because they have the largest expectation and
variance, respectively. 
These results confirm both the efficiency of MARS as a
surrogate and the fast convergence of the expectation of the indices with only
$200$ samples in $\Omega$.

\begin{figure}[ht]
\centering
\includegraphics[width=.65 \textwidth]{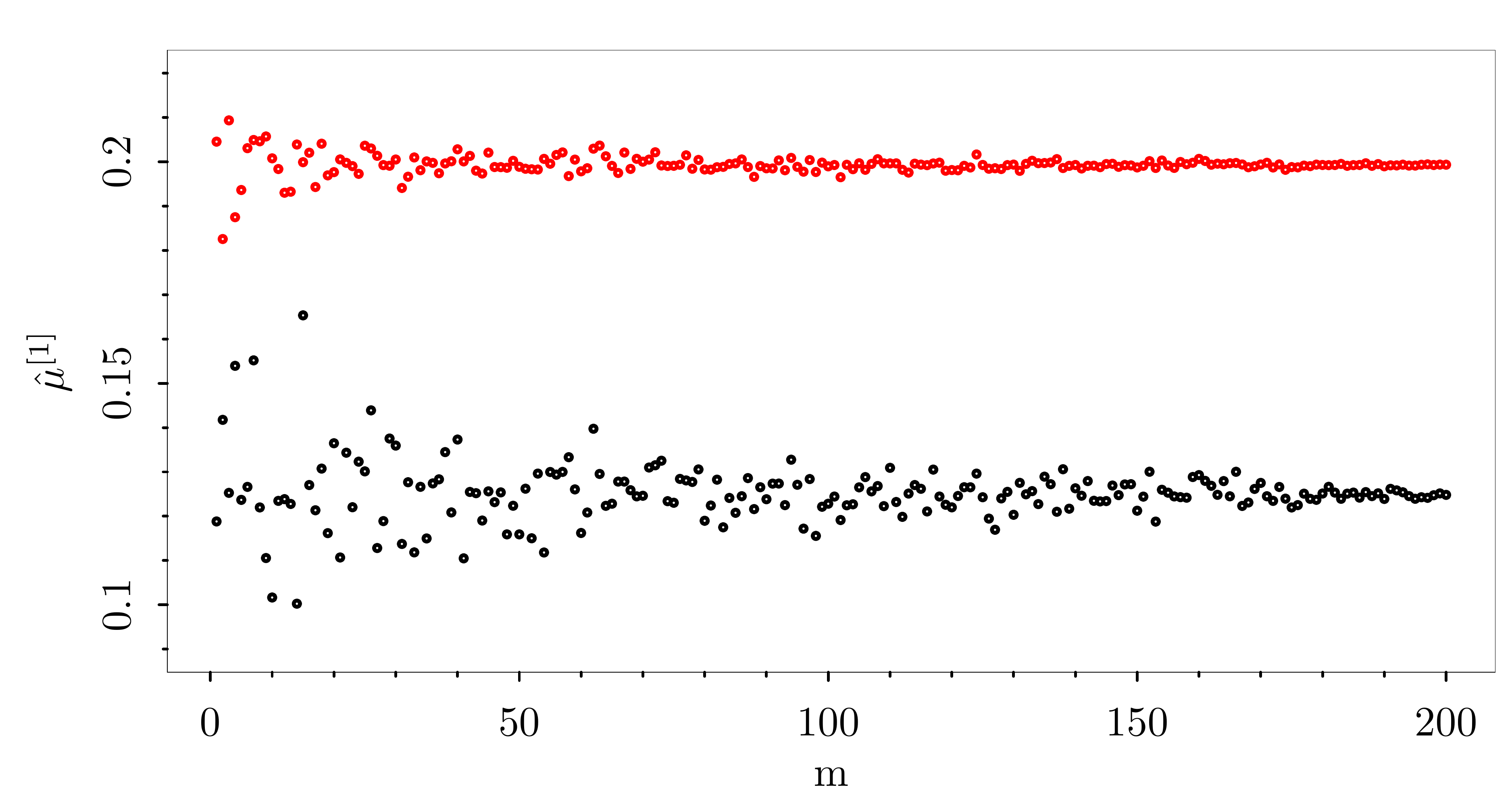}
\caption{Dependency  of $\hat{\mu}_1^{[1]}$ (red) and $\hat{\mu}_3^{[1]}$ (black) on the number of samples $m$ in $\Omega$. 
} 
\label{fig:conv_in_m} 
\end{figure}

Accurate approximations of the distributions of the Sobol' indices can be obtained from sampling their analytical expressions; as above, we take $10^6$ samples from $\Omega$. We use these highly accurate approximations to
 assess convergence in distribution of the Sobol's indices computed through our proposed method.
 \begin{figure}[ht]
\centering
\includegraphics[width=.49\textwidth]{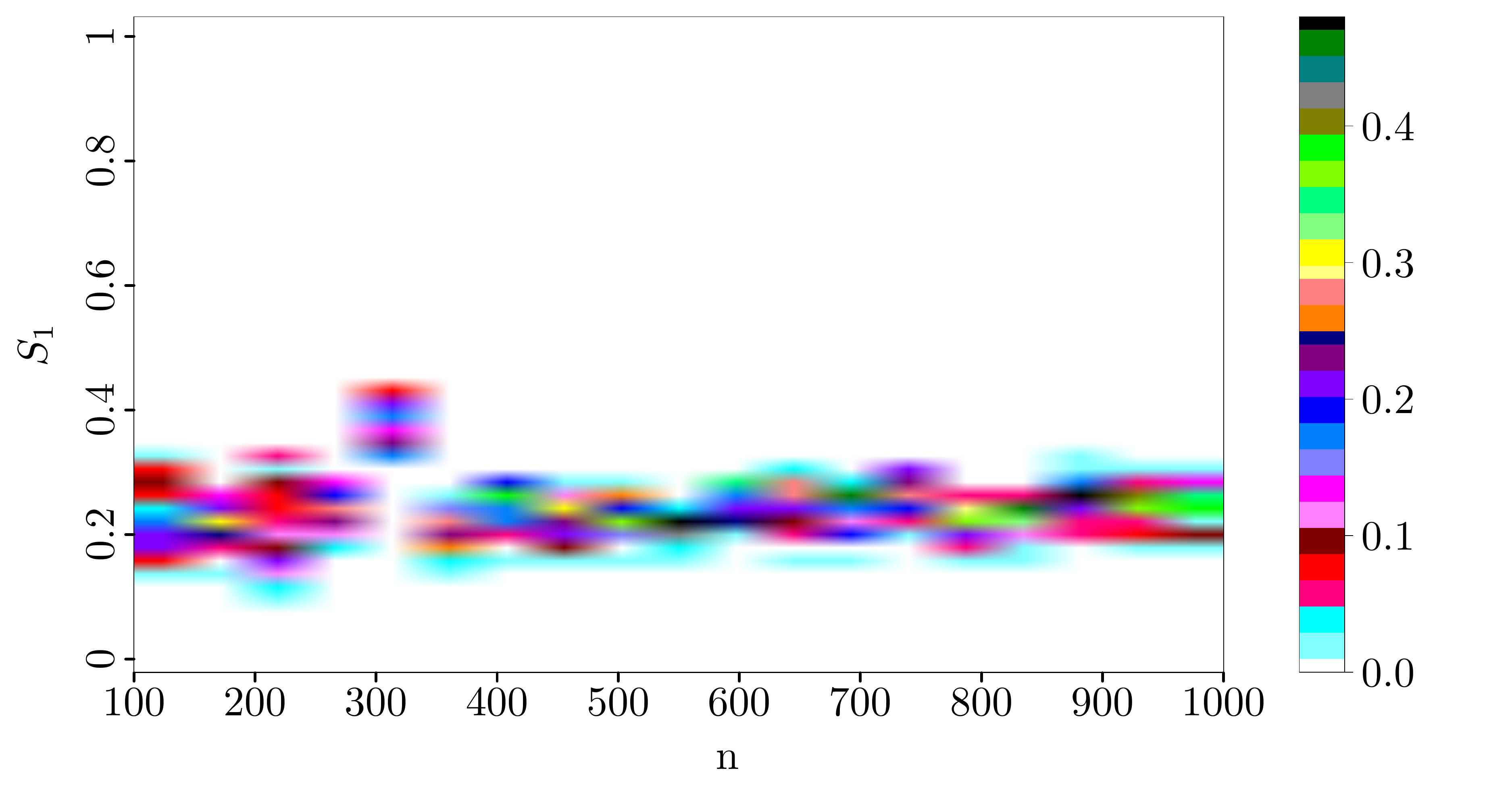}
\includegraphics[width=.49\textwidth]{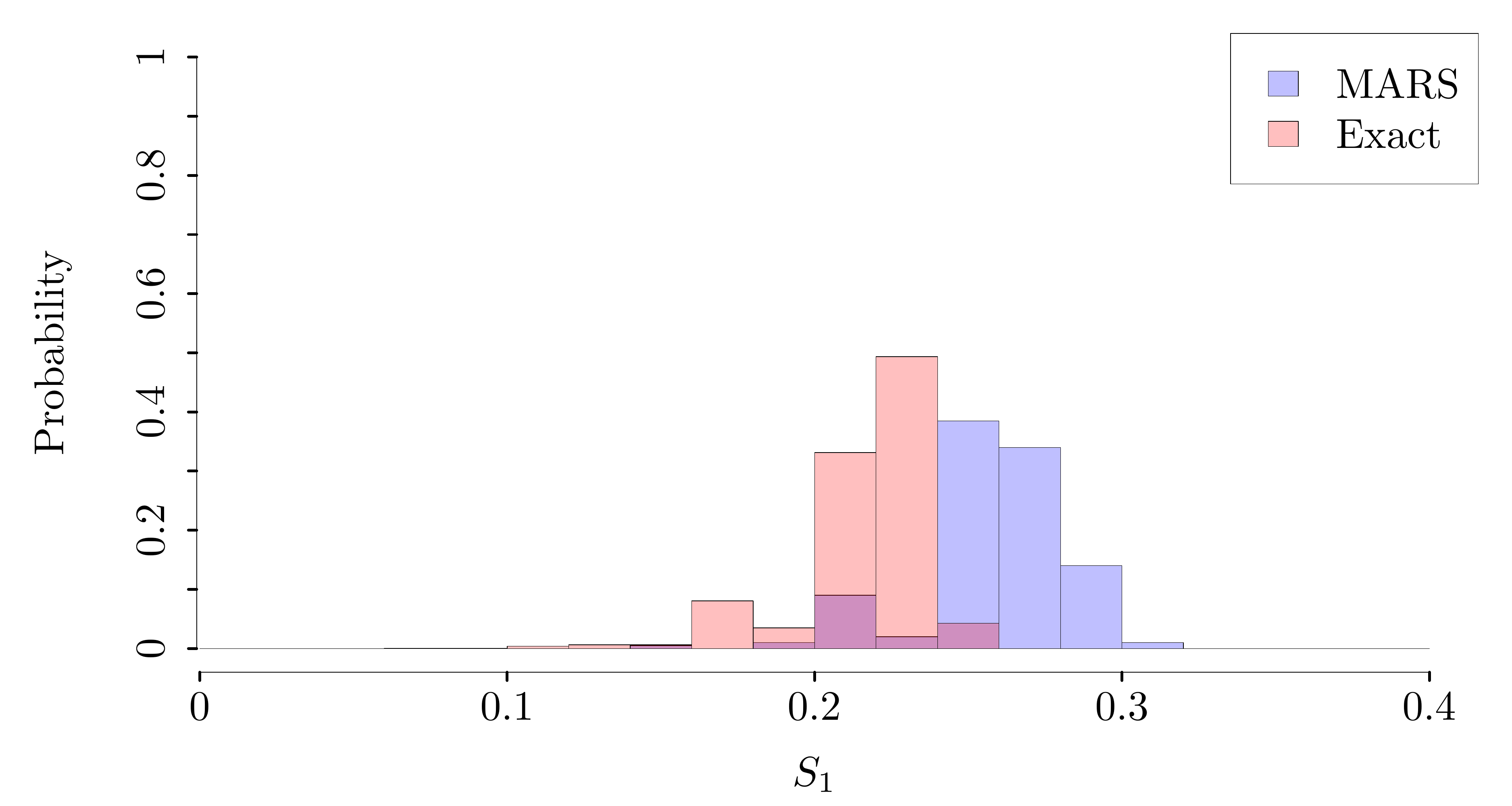} \\
\includegraphics[width=0.49\textwidth]{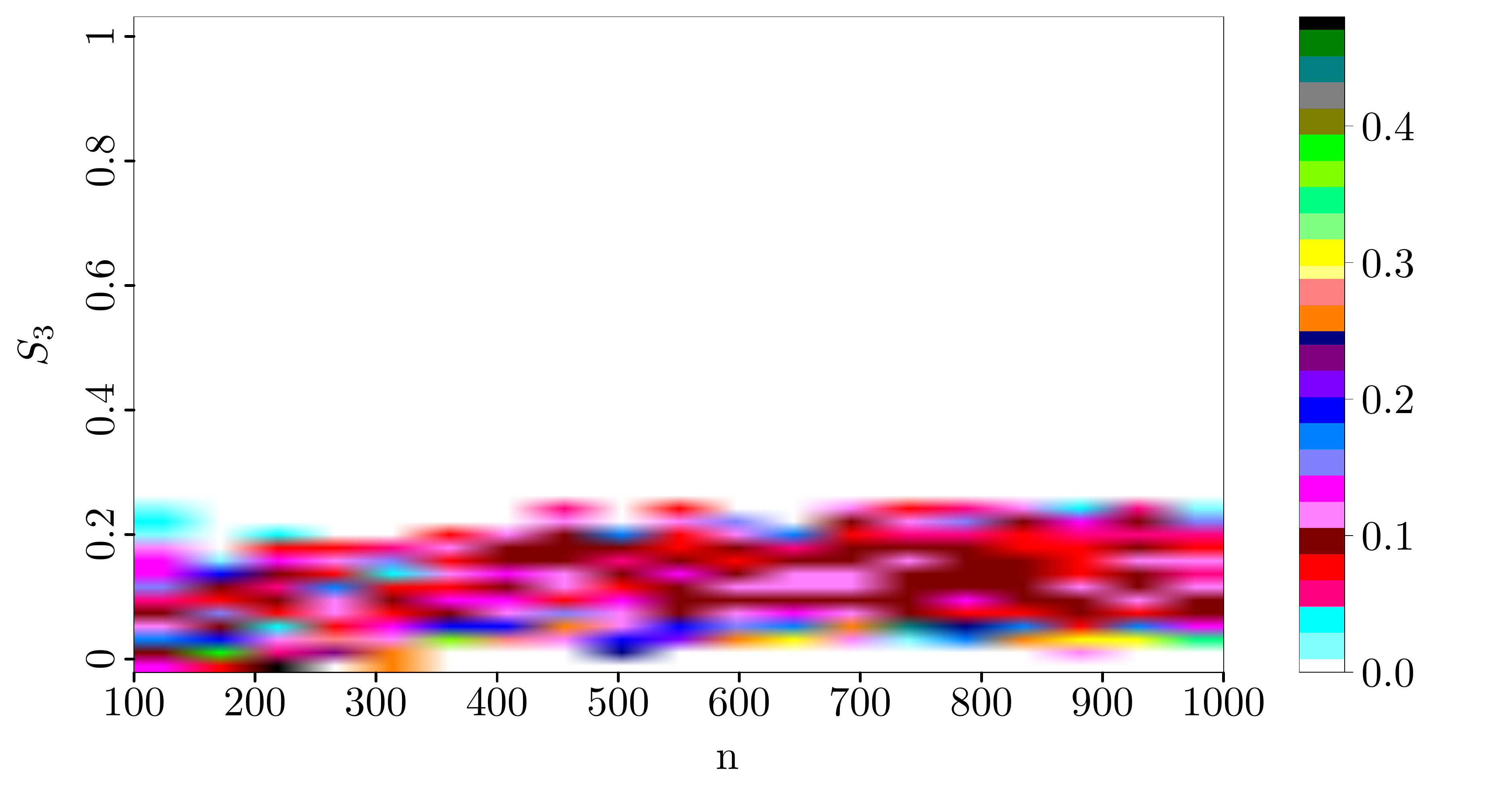}
\includegraphics[width=0.49\textwidth]{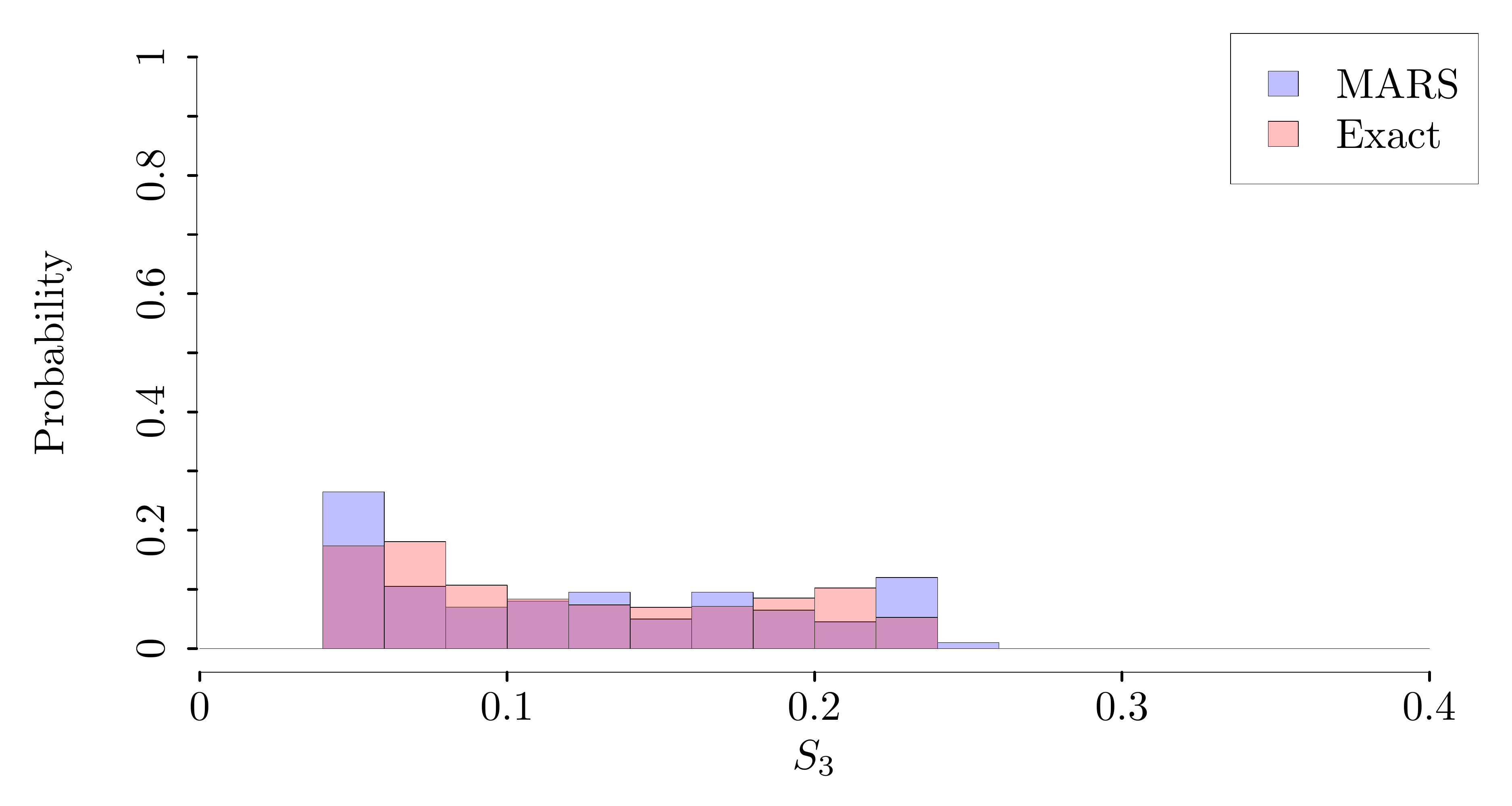}
\caption{Convergence in distribution of the Sobol' indices for the g-function (\ref{sgfunction}). Top row; $S_1$, index with the largest expectation; bottom row: $S_3$, index with the largest variance. Left: heat map of the histograms as the surrogate sampling size $n$ varies. Each vertical slice is a histogram for a fixed $n$; right: comparison of the ``exact" (see text) and approximation distributions using $n=1000$.}
\label{fig:distconv1fig}
\end{figure}

For each $\omega_i$, $i=1,\dots, 200$, one thousand points are sampled from the
uncertain parameter space; these are subsampled for $n=100,150,\dots,950,1000$
and the resulting histograms are evaluated.
Figure~\ref{fig:distconv1fig} illustrates convergence in distribution of both
$S_1$, the index with largest expectation, and $S_3$, the index with largest
variance.
\begin{figure}[h]
\centering
\includegraphics[width=.5 \textwidth]{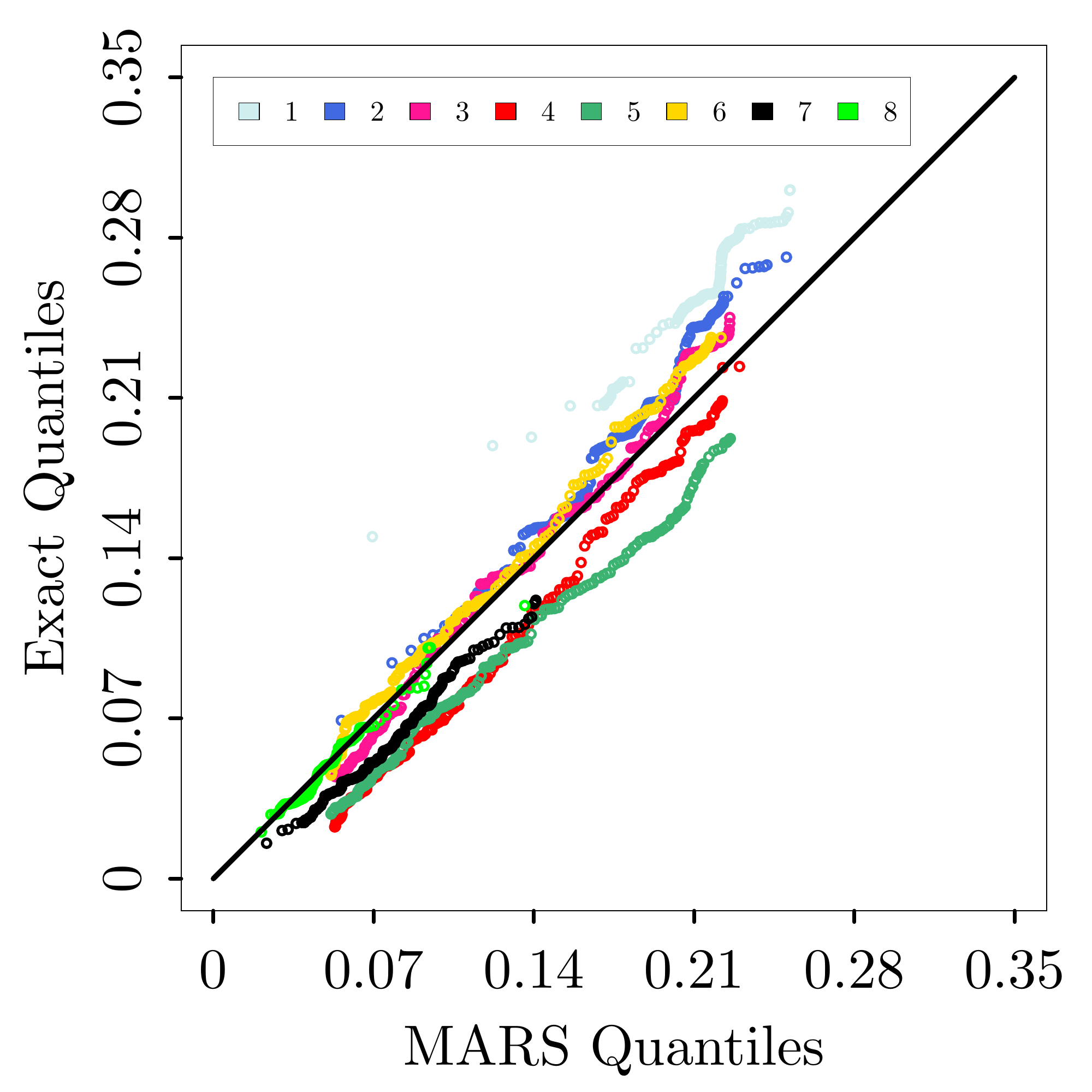}
\caption{QQ plot of the Sobol' indices of the eight most important variables. Lying above or below the line indicates being biased high or low respectively.}
\label{fig:qq}
\end{figure}
For $S_1$, the histograms appear to converge, see
Figure~\ref{fig:distconv1fig}, top left; however,
Figure~\ref{fig:distconv1fig}, top right, shows that they converge to a
distribution that is biased high. This results from the built-in adaptivity of
MARS which causes an inherent bias toward the more important variables. As
mentioned in Section 1, this is a useful feature for the purpose of dimension
reduction.  The bottom row of Figure~\ref{fig:distconv1fig} illustrates the
unbiased convergence in distribution of $S_3$. Figure~\ref{fig:qq} further
illustrates the bias from MARS with a QQ plot of the eight most important
variables. The ``exact" distribution was generated using $10^6$ samples from the analytic expressions of the Sobol' indices. Lying along the line indicates being unbiased; lying below or above
the line indicates being biased low or high respectively. This plot
demonstrates the general trend that the most important variables are biased high
while the less important variables are biased low; $S_6$ and $S_8$ fail to follow this trend.

\subsection{Genetic oscillator}\label{sec:genosci}

We apply the proposed  stochastic sensitivity analysis method to the study of a circadian oscillator mechanism
from biochemistry. 
The problem  is detailed in~\cite{VilarEtAl02} and is commonly referred to as the genetic
oscillator. It  corresponds to a biochemical reaction network consisting of nine species
and sixteen reactions. 
The nine system species are described in Table~\ref{tab:species}.

\begin{table}[ht]
\centering
\ra{1.3}
\begin{tabular}{ll}
\toprule
$\snDA$, $\snDAp$ & activator genes \\
$\snDR$,  $\snDRp$ & repressor genes \\
$\snA$, $\snR$ & activator and repressor proteins \\
$\snMA$, $\snMR$ & mRNA of $\snA$ and $\snR$ \\
$\snC$ &  complex species \\
\bottomrule
\end{tabular}
\caption{Nine species of the genetic oscillator problem from \cite{VilarEtAl02}.}
\label{tab:species}
\end{table}

Denoting the number of molecules of
each of the species with the corresponding symbol, the state vector of the
system is given by 
\[
[\snDA, \snDAp, \snA, \snDRp, \snDR, \snR, \snMA, \snMR,
\snC] \in \R^9.  
\]
The initial state is taken as 
\[
[1, 0, 0, 0, 1, 0, 0, 0, 0],
\]
 that is, initially, $\snDA = \snDR = 1$, and all the other variables are set to zero.

The reactions and reaction rates are listed in
Table~\ref{tbl:reactions} which also includes the nominal reaction rates from~\cite{VilarEtAl02}.  
\begin{table}\centering
\ra{1.3}
\begin{tabular}{clll}
\toprule
reaction \# & reaction && rate (nominal value) \\
\midrule
1 & $\snDA + \snA $&$\to \snDAp$         & $\gamma_A$        (1.0) \\
2 & $\snDAp      $&$\to \snDA + \snA$    & $\theta_A$        (50.0) \\
3 & $\snDR + \snA $&$\to \snDRp$         & $\gamma_R$        (1.0) \\
4 & $\snDRp      $&$\to \snDR + \snA$    & $\theta_R$        (100.0) \\
5 & $\snA + \snR  $&$\to \snC$           & $\gamma_C$        (2.0)\\
6 & $\snDA       $&$\to \snDA + \snMA$   & $\alpha_A$        (50.0) \\
7 & $\snDR       $&$\to \snDR + \snMR$   & $\alpha_R$        (0.01) \\
8 & $\snDAp      $&$\to \snDAp + \snMA$  & $\alpha_A^\prime$ (500.0) \\
9 & $\snDRp      $&$\to \snDRp + \snMR$  & $\alpha_R^\prime$ (50.0) \\
10 & $\snMA      $&$\to \snMA + \snA$    & $\beta_A$         (50.0) \\
11 & $\snMR      $&$\to \snMR + \snR$    & $\beta_R$         (5.0) \\
12 & $\snMA      $&$\to \emptyset$     & $\delta_{MA}$       (10.0)\\
13 & $\snA       $&$\to \emptyset$     & $\delta_A$          (1.0)\\
14 & $\snMR      $&$\to \emptyset$     & $\delta_{MR}$       (0.5) \\
15 & $\snR       $&$\to \emptyset$     & $\delta_R$          (0.2) \\
16 & $\snC       $&$\to \snR$           & $\delta_A$         (same as react. 13)  \\
\bottomrule
\end{tabular}
\caption{Reactions and reaction rates for the genetic oscillator system~\cite{VilarEtAl02}.} 
\label{tbl:reactions}
\end{table}
We consider parametric uncertainties in the reaction rate constants; that is, 
the uncertain parameter vector for the system is given by 
\[\vec{X} = [\gamma_A,\theta_A,\gamma_R,\theta_R,\gamma_C,
\alpha_A,\alpha_R,\alpha_A^\prime,\alpha_R^\prime,\beta_A,\beta_R,\delta_{MA},\delta_A,\delta_{MR},
\delta_R] \in  \R^{15}.
\]
We assume that the coordinates of $\vec{X}$, i.e., the reaction rates, are iid uniform
random variables centered at their respective nominal values given in
Table~\ref{tbl:reactions}, and with a $10\%$ perturbation around the mean.  

With fixed reaction rates the time evolution of the state vector is stochastic.
The sequence of reactions is random with probabilities parameterized by the
reaction rates and state vector; see e.g.,~\cite{SamadKhammashPetzold05}. We compute realizations of the genetic
oscillator through Gillespie's stochastic simulation algorithm
(SSA)~\cite{ssa1,ssa2,SamadKhammashPetzold05}. To fix a realization of the inherent stochasticity and
sample the reaction rates we generate and save a sequence of random numbers to
input to SSA for each sample of the reaction rates. This corresponds to
evaluating $f(\vec{X}_j, \omega_i)$, for each $j = 1, \ldots, n$, in
Algorithm~\ref{alg:method}; $\omega_i$ corresponds to the fixed sequence of
random numbers.

Our goal is to evaluate  the sensitivity of the number of $\snC$ molecules to the uncertain
reaction rates.  
Letting the probability spaces $(\Theta, \mathcal{E}, \nu)$ and $(\Omega, \mathcal{F}, \mu)$ carry the intrinsic and the parametric
randomness of the system, respectively, we  
note that 
$\snC: \Theta \times \Omega
\times [0, T_\text{final}] \to \R$ is a stochastic process.
 Figure~\ref{fig:Cplot} illustrates the
dynamics of $\snC = \snC(\vartheta,\omega, t)$ by displaying
 four typical realizations of the stochastic process when the uncertain
parameters are fixed at their nominal values.  The differing periods and small
oscillations are a result of the inherent stochasticity of the system.
\begin{figure}[h]
\centering
\includegraphics[width=.75\textwidth]{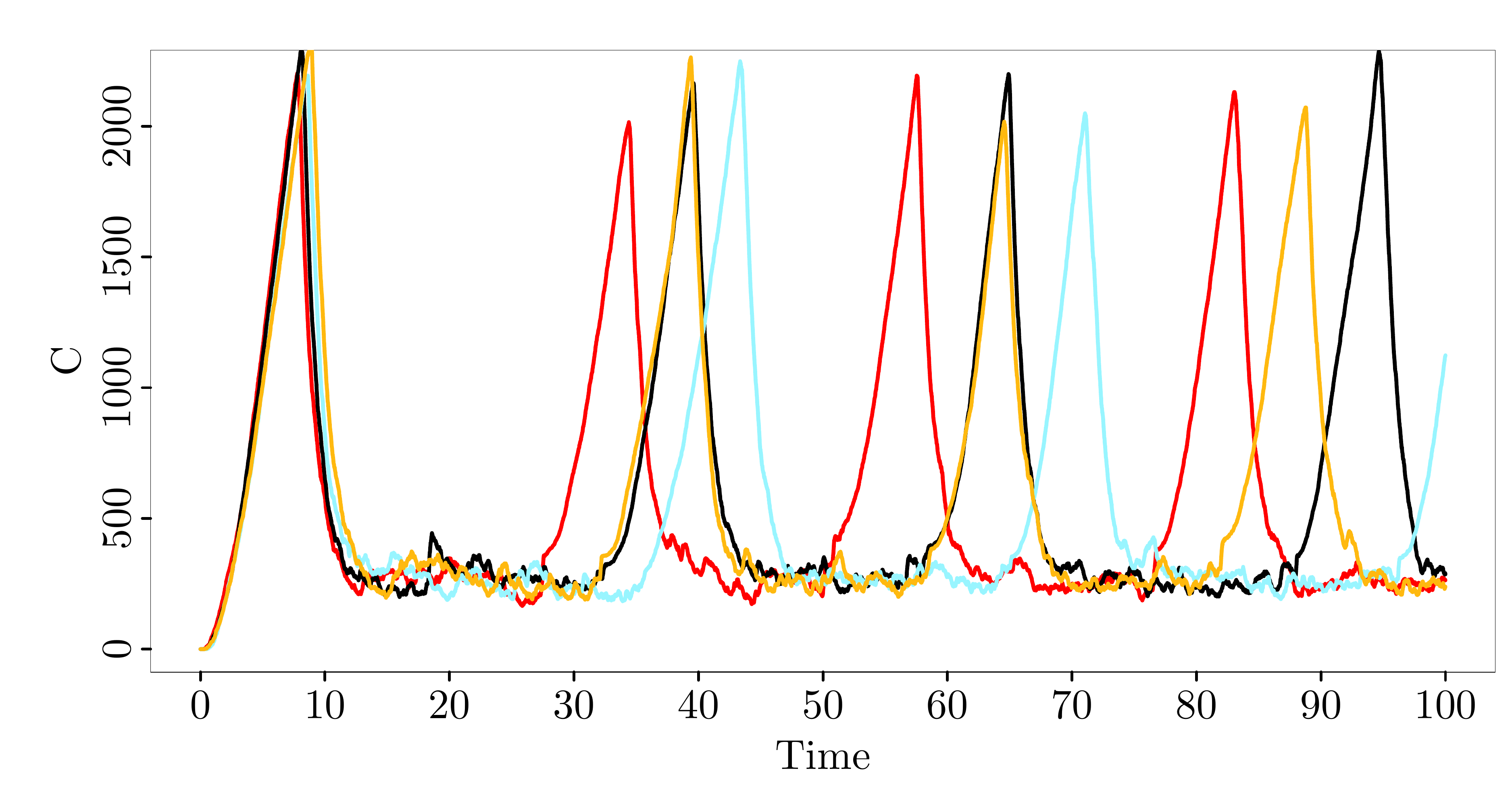}
\caption{Genetic oscillator: four realizations of the evolution of the  complex $C$.}
\label{fig:Cplot}
\end{figure}

We use Algorithm~\ref{alg:method}, with $n=400$, $m=200$, and 
MARS as the surrogate model. By  subsampling $\Theta$ from our
existing data, we assess convergence in $n$ and determine that
$n=400$ is an adequate sample size for this application. Moreover, 
thanks to Proposition~\ref{prp:estimates}, a relatively small $m$ is
sufficient to ensure a small variance of the estimators of the Sobol' indices.

Figure~\ref{fig:genoscillsensitivity}, left, shows the time
evolution of the expectation of the Sobol' indices; for each index, the expectation 
 becomes periodic after an initial transient.  Moreover, we observe that the reaction rates
$\beta_R$ and $\alpha_R^\prime$ have the most notable contribution to the model
variance during the transient regime.  After the transient regime, the
degradation rates for the proteins $\snA$ and $\snR$, i.e.,  $\delta_A$ and
$\delta_R$,  are the most important factors.  Figure~\ref{fig:genoscillsensitivity}, top right,  displays the time
evolution of the expectation of the Sobol' indices for these two 
reaction rates. We note in particular that $\delta_A$
and $\delta_R$ periodically swap  role of the most dominant contributor to
variance of $\snC$ after the intial transient regime.  
The time-dependent behavior of the statistical distribution of sensitivity
index for $\delta_R$ is illustrated in Figure~\ref{fig:genoscillsensitivity}, bottom right.
\begin{figure}
\begin{minipage}[t]{.5\linewidth}\vspace{0pt}
\includegraphics[width=\linewidth]{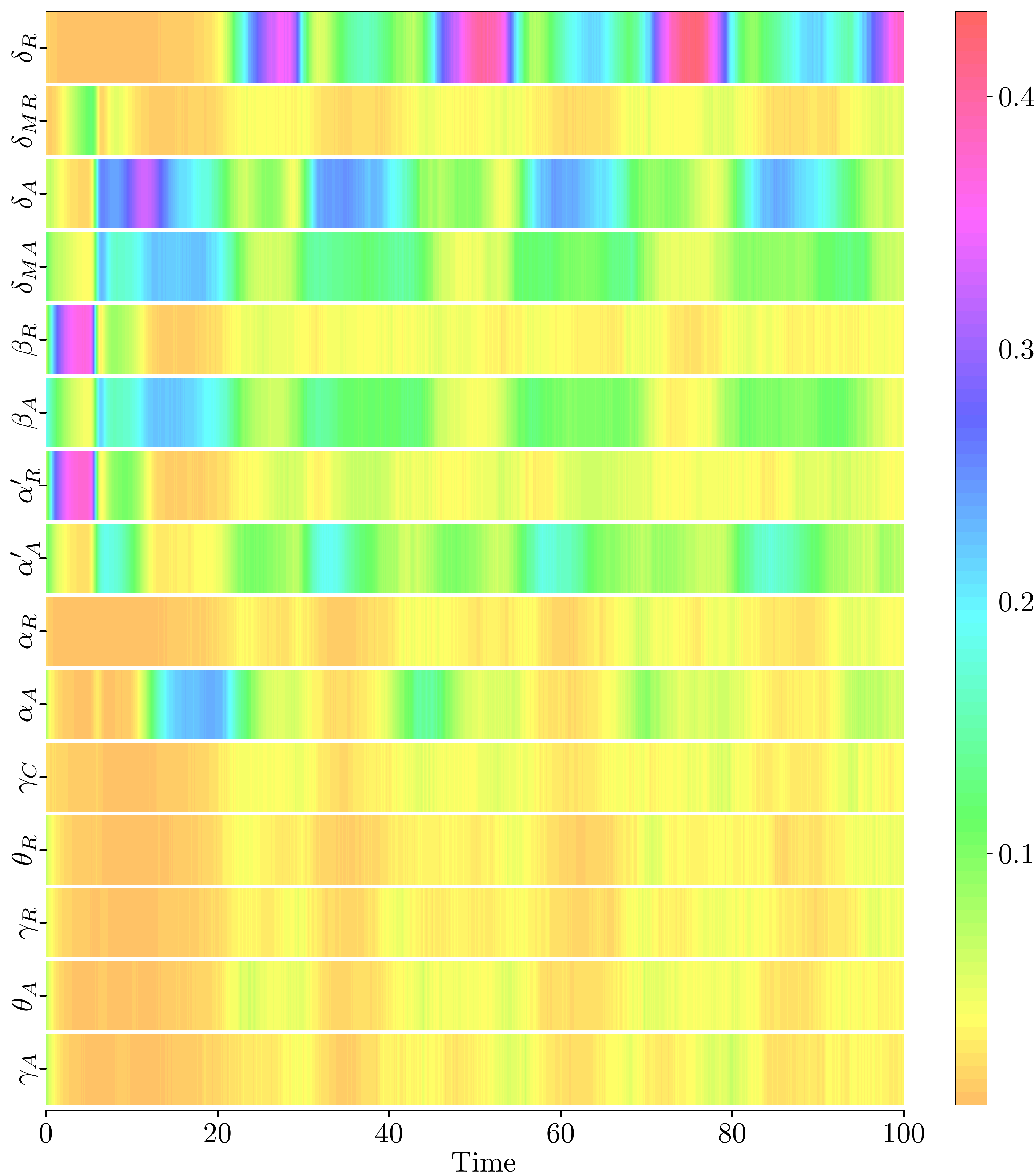}
\end{minipage}
\begin{minipage}[t]{.5\linewidth}\vspace{0pt}\raggedright
\vspace{0.2cm}
\begin{tabular}{l}
\hspace{0.04cm}
\includegraphics[width=.855\textwidth,height=.5\textwidth]{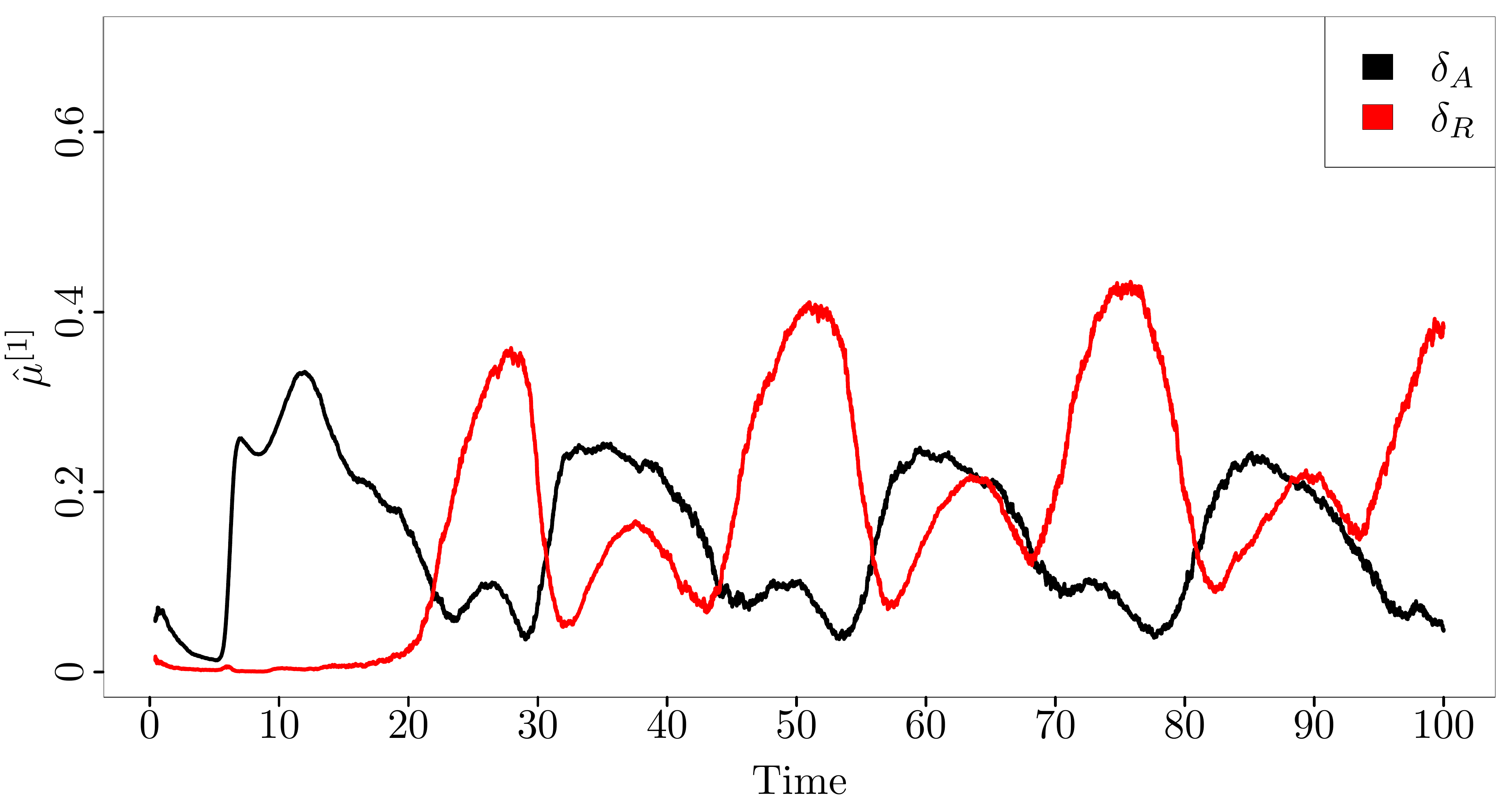}\\
\includegraphics[width=1\textwidth]{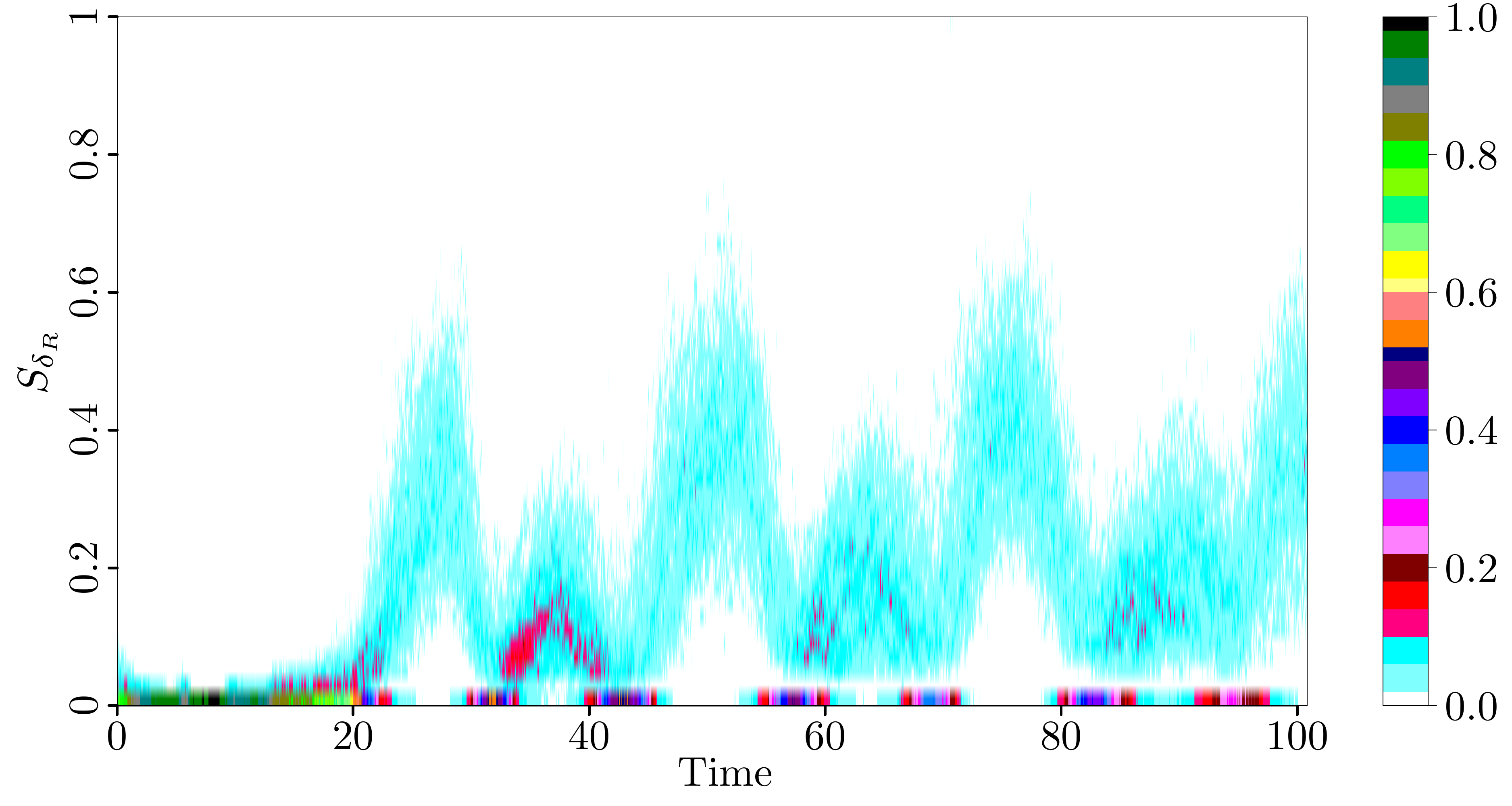}
\end{tabular}
\end{minipage}
\caption{ Evolution of the Sobol' indices for the genetic oscillator. Left: expectation of the
Sobol' indices. Each row corresponds to a specific reaction rate. Top right: time
evolution of the expectation of the Sobol' indices for the two most important
reaction rates.  Bottom right:  time evolution of the histogram of the Sobol'
index of $\delta_R$. Each vertical slice is a histogram for the Sobol' index of
$\delta_R$ at a given time. 
}
\label{fig:genoscillsensitivity}
\end{figure}

The computational cost of the above analysis is $n \times m = 400 \times 200 =
80,000$ SSA simulations. While this is still a significant number of function
evaluations, it should be contrasted with the complexity of traditional sampling
based method for computing the Sobol' indices, given in~\eqref{equ:Tslow},
which, for such a stochastic model, would be orders of magnitudes larger.

\section{Summary and future work}
\label{sec:discussion}

We have proposed and investigated a strategy for the global sensitivity
analysis of stochastic models.  Within this framework,  a thorough analysis  of
variable importance is obtained by computing  the statistical properties  of
the Sobol' indices.  The proposed approach requires sampling in the product of
probability spaces carrying the model stochasticity and parametric uncertainty.
The number of samples in the uncertain parameter space is driven by the choice
of surrogate model construction, and can be controlled via the use of adaptive
surrogates such as MARS.  We provide theoretical and numerical evidence that
the moments of the indices can be evaluated with only a modest number of
samples in the stochastic space. 

As mentioned in the introduction, one may also consider performing sensitivity
analysis on $\mathbb E_\omega \{f(\vec{X},\omega) \}$ directly. However, as
illustrated in Example~\ref{ex:synthetic}, this approach may result in a
significant loss of information.  The
variance of the Monte Carlo estimator for $E_\omega \{f(\vec{X},\omega) \}$ is
$\operatorname{Var}_\omega \{f(\vec{X},\omega)\}/m$, which in general is not
known a priori. In contrast, the variance of the Monte Carlo estimator
for $\mathbb E_\omega \{ S_u(\omega) \}$ in our proposed framework is
bounded above by $1/(4m)$ independently of $f$.  This gives our proposed method
theoretical and computational advantages.


Our numerical results focus on computing first order Sobol' indices and MARS
was shown to be an efficient surrogate for this task. Higher order indices may
be computed in our proposed framework as well, provided a sufficiently accurate surrogate
model is available.

 
In our future work, we aim to address the following:

\begin{itemize}
\item Thorough analysis of the role played by the surrogates both in terms of acting 
as possible screening mechanisms (as is the case for MARS) and regarding index approximation errors.
\item Explore the use of other surrogates to compute higher order indices in our framework.
\item Convergence analysis of the distributions of the Sobol' indices and study of what can be inferred 
from them in light of surrogate induced biases. 
\end{itemize}

\bibliographystyle{siam}
\bibliography{stochsobol}

\end{document}